\newtheorem{theorem}{Theorem}
\newtheorem{corollary}{Corollary}[theorem]
\newtheorem{proposition}{Proposition}
\newtheorem{propcorollary}{Corollary}[proposition]
\begin{document}

\title{Multipartite Entanglement Accumulation in Quantum States:\texorpdfstring{\\}{} Localizable Generalized Geometric Measure}

\author{Debasis Sadhukhan, Sudipto Singha Roy, Amit Kumar Pal, Debraj Rakshit, Aditi Sen(De), and Ujjwal Sen}
\affiliation{Harish-Chandra Research Institute, Chhatnag Road, Jhunsi, Allahabad - 211019, India}
\affiliation{Homi Bhabha National Institute, Training School Complex, Anushaktinagar, Mumbai 400094, India}


\begin{abstract}

Multiparty quantum states are useful for a variety of quantum information and computation protocols. We define a multiparty entanglement 
measure based on local measurements on a multiparty quantum state, and an entanglement measure averaged on the post-measurement 
ensemble. Using the generalized geometric measure as the measure of multipartite entanglement for the ensemble, 
we demonstrate, in the case of several well-known classes of multipartite pure states, that the localized multipartite entanglement 
can exceed the entanglement present in the original state. We also show that measurement over multiple parties may be beneficial in 
enhancing localizable multipartite entanglement. We point out that localizable generalized geometric measure faithfully signals quantum critical phenomena in well-known quantum spin models even when considerable finite-size effect is present in the system. 

\end{abstract}

\maketitle 

\section{Introduction}
\label{intro}

Emergence of multipartite entanglement \cite{horodecki_rmp}
as a crucial ingredient in several information processing tasks like 
measurement-based quantum computation \cite{onewayQC}, 
quantum dense coding \cite{densecode,densecode_bruss,ASDUS}, 
and quantum cryptography \cite{crypto,communication_a,communication_b} has emphasized the importance of
quantifying entanglement in multipartite systems. Multipartite entanglement has been proven essential also in 
detecting cooperative phenomena such as quantum phase 
transitions (QPTs) \cite{geom_ent_manybody,ggmn}, 
and to explain transport properties in photosynthetic 
complexes \cite{transport} (see \cite{trap_rev,fazio_rmp,modi_rmp} for reviews). 
In this respect, it has also been pointed out that multipartite entanglement can be necessary to detect 
some QPTs, which are not clearly signaled by bipartite measures \cite{multi-vs-bi}.  
The growing interest for estimating multipartite entanglement in 
many-body systems is also sustained by impressive experimental 
advances towards creating entangled particles in laboratories with various substrates, {\it e.g.}, trapped ions \cite{ions},  
photons \cite{photons}, superconducting materials \cite{sc-qu}, nuclear magnetic resonance (NMR) \cite{nmr_multi}, and 
optical lattices \cite{optical-lattice} (see also \cite{exp}).
However, despite considerable attempts, progress in developing measures of multipartite entanglement has been limited 
\cite{horodecki_rmp,rel_ent_ent,glob_ent,other_ent,geom_ent,geom_ent_ksep,geom_ent_book,ggm_first,ggmn,score,monogamy,tangle_bunch}.

An interesting utility of multipartite quantum states is the measurement based quantum computation \cite{onewayQC}, where 
quantum gates are implemented by solely performing suitable measurements on different local parts of a previously prepared 
quantum state of a number of parties, eg., on a lattice. In a different situation, one may consider performing measurements on some parts of 
a multiparty quantum state, so that the remaining parties share a useful quantum state. A particularly important example is provided by the 
Greenberger-Horne-Zeilinger (GHZ) state \cite{ghzstate}, given by 
$|\psi\rangle=(|0\rangle^{\otimes N}+|1\rangle^{\otimes N})/\sqrt{2}$. Tracing out $m$ qubits from $|\psi\rangle$, where $m<N-1$,  leads to 
a separable state with vanishing entanglement, while
performing local measurements over the $m$ parties, one obtains an ensemble of pure states conditioned to the measurement 
outcomes, which has non-zero entanglement. Such protocols encourages one to consider a general scenario where an $N$-party quantum state, 
$\rho_N$, is measured at a certain number of parties, to obtain an ensemble of a lower number of parties, with 
the measurements being so chosen that the average entanglement, according to a pre-decided measure, of the post-measurement ensemble is 
maximized. Such entanglement accumulation scheme have been previously used to define entanglement measures \cite{entass,pop_rohr,W-vs-GHZ,loc_ent}, 
where it was usual to choose the number of unmeasured parties as two. Here we go beyond the regime, 
where the number of parties in the post-measurement ensemble is more than two.

In this paper, we introduce a localizable multipartite entanglement (LME) measure, 
in terms of the geometric measures of entanglement \cite{ggm_first,ggmn,score}, and discuss 
its various properties. 
Specifically, we prove that the LME is invariant under local unitary transformations, and show
that it is bounded above by any upper bound of its parent multiparty entanglement measure. 
Note that the concept of LME requires the knowledge of another multiparty entanglement measure, and due to the 
compact computational form of multiparty entanglement, as quantified by the generalized geometric measure (GGM) \cite{ggm_first,ggmn,score}, 
we restrict ourselves to the cases 
in which the GGM is identified with the latter measure. 
We call the corresponding quantity as 
the localizable GGM (LGGM). For arbitrary number of qubits, we analytically 
find the exact expression of LGGM for several classes 
of multipartite states which include  the generalized 
GHZ (gGHZ) state \cite{ghzstate}, generalized W (gW) state \cite{zhgstate,W-vs-GHZ,dvc}, 
and Dicke states with different excitations \cite{dicke,dicke_entangle,dicke_utility}, 
when measurement is restricted to a single qubit. 
Interestingly, we find that in the case of gGHZ state, 
LGGM coincides with the GGM of the original state. On the other hand, 
for gW state, local measurement helps to accumulate higher 
multipartite entanglement in the lower number of qubits, as  compared to the 
content of multipartite entanglement of the original state, 
showing qualitatively distinct behavior than the gGHZ state. Moreover, we prove that the value of LGGM
over two qubits in the case of an arbitrary three-qubit pure state is always 
lower bounded by the value of the geometric measure of the original state, 
while no such bound exists when higher number of 
qubits are involved. 

For specific classes of four- and five-qubit states, we show that local measurement on two parties may help 
to increase LGGM, as compared to the same with only single-qubit measurement.
Extensive numerical simulations seem to imply that such observation holds for almost all four- and 
five-qubit states. We also consider the utility of LGGM in detecting quantum cooperative phenomena in many-body systems. We perform finite-size calculations to show that it can detect the QPTs \cite{qpt_book} occurring in the one-dimensional (1d) quantum Ising model in a transverse field \cite{xy_group}.
Moreover, we show that LGGM signals the QPTs including the Kosterlitz-Thouless (KT) transition of the 1d XXZ model
\cite{xxz_group,mikeska,takahashi,bethe}.

The paper is organized as follows. Sec. \ref{def_prop} contains 
the formal definition of LME using geometric measures as the quantifier. It also 
describes various properties of LME. Sec. \ref{sqm} describes the results 
regarding single-qubit measurement in the case of several well-known 
examples of multiqubit states, such as the generalized GHZ state (gGHZ), 
the generalized W (gW) state, and the $N$-qubit symmetric states. The effect of measurement over 
more than one qubit on the value of LME, along with specific examples in the 
case of four- and five-qubit systems, is discussed in Sec. \ref{mqm}. 
The numerical results regarding arbitrary three-, four-, and five-qubit 
pure states are presented in Sec. \ref{num_res}. 
Sec. \ref{spin_lggm} deals with the 
study of the behavior of LME in well-known quantum spin models. 
Sec. \ref{conclude} contains the concluding remarks.

\section{Localizable Multipartite Entanglement}
\label{def_prop}

In this section, we formally introduce the LME, and discuss its properties. 

\subsection{Definition}
\label{definitions}

The LME of a multiparty pure state can be defined as the maximum average multipartite 
entanglement that can be concentrated over a certain specified set
of parties in the system by performing local measurements over the rest of the parties. 
Let us consider a pure state $|\Phi_{N}\rangle$, which
describes a multipartite system consisting of $N$ parties distinguished by the 
index $i=1,2,\cdots N$. For simplicity, while defining the LME, we consider that the dimension of Hilbert 
space of each of the parties is same, i.e., $d_{i}=d,\,i=1,\cdots,N$. 
However, a more generalized definition using different dimensions for different parties can also be given. 

Let us consider local quantum measurements performed by any $m$ parties on the $N$-party state, $|\Phi_N\rangle$. 
Let $\mathbf{r}$ be the set of positions of the measured qubits, given by  
$\mathbf{r}=\{r_j\},j=1,\cdots,m$, where $r_j\in\{1,2,\cdots,N\}$. The local measurements lead to 
a post-measurement outcome  ensemble, $\{p^l,|\Psi^l_{N}\rangle\}$, of pure states. The index $l$ denotes the running index 
of the measurement outcomes, and runs over the joint Hilbert space of the measured parties having dimension $d^m$.
Here, $p^l$ is the probability of obtaining the state 
$|\Psi^{l}_{N}\rangle=M^l|\Phi_N\rangle$, with $M^l$ being the corresponding measurement operator, 
and $\sum_{l=1}^{d^m}p^l=1$.  
The LME for 
the multipartite state $|\Phi_N\rangle$, for local measurements at $\textbf{r}$, can be defined as 
\begin{eqnarray}
 E_{L,\mathcal{E}}^{n,\mathbf{r}}(|\Phi_N\rangle) = \underset{\mathcal{M}}{\sup}\sum_{l=1}^{d^{m}}p^{l}\mathcal{E}
 \left(|\psi^{l}_{n}\rangle\right),
 \label{lme}
\end{eqnarray}
where $\mathcal{M}\equiv\{M^l\}$ is a set of measurement operators on the $d^m$-dimensional Hilbert space, 
and $\mathcal{E}\left(|\psi\rangle\right)$
is an arbitrary multipartite entanglement measure for the state $|\psi\rangle$. 
Here, $n=N-m$ $(n\leq N-1)$, and is currently redundant in the notation. Its use will become clear once we reach 
Eq. (\ref{glob_lggm}). We will also assume that $n\geq 2$, although the formalism adopted here can be generalized to the 
case of $n=1$ also. 
The state $|\psi^l_n\rangle$ corresponding to the $N$-party
state $|\Psi^l_{N}\rangle$, for a fixed $\mathbf{r}$ and a fixed measurement outcome $l$, is obtained by tracing out 
the $m$ parties of which local measurements are performed. The tracing out operation is performed after the local measurement has been carried out.
The supremum is taken over all complete sets of measurement operators in
$\mathcal{M}$, since the supremum is not guaranteed to be attained within the set, due to the possible complex nature of $\mathcal{M}$.   

It is important to note here that the conceptualization of an LME depends on the understanding of another measure of multiparty entanglement 
of a lower number of parties. This latter measure is in some sense acting as a ``seed measure'' for the LME. 
To define LME, one may consider measurement protocols corresponding 
to, for example, projective measurements (PV) without classical communication between the parties, or positive operator valued 
measures (POVMs) without classical communication between the parties, or general local operations and classical communication (LOCC). 
In the last case, the classical communication (CC) is 
among the $m$ parties over which the local operations are performed. 
It is clear that 
\begin{eqnarray}
 E_{L,\mathcal{E}}^{n,\mathbf{r}}|_{\mbox{\tiny PV\normalsize}}
 \leq E_{L,\mathcal{E}}^{n,\mathbf{r}}|_{\mbox{\tiny POVM\normalsize}}
\leq E_{L,\mathcal{E}}^{n,\mathbf{r}}|_{\mbox{\tiny LOCC\normalsize}},
\label{relative}
\end{eqnarray}
for a fixed multipartite state, $|\Phi_{N}\rangle$, a fixed set of 
measured parties, $\mathbf{r}$, and a 
chosen multiparty entanglement measure, $\mathcal{E}$.  

One must note here that for a fixed initial multipartite state 
$|\Phi_N\rangle$, and a fixed set of measurement protocols, the value of LME depends on two factors: \textbf{(i)} 
the multipartite entanglement measure $\mathcal{E}$ (the seed measure), and \textbf{(ii)} the set $\mathbf{r}$, i.e., 
the choice of $m$ parties over which local measurements are performed. Before discussing the choice of $\mathcal{E}$, 
let us briefly consider the dependence of LME over the set $\mathbf{r}$ that is inherent in the definition.
For an $N$-partite system with local measurements at $m$ parties, a set $\mathbf{R}=\{\mathbf{r}_\alpha\}$,
$\alpha=1,2,..,\binom{N}{m}$, of all possible choices of $\mathbf{r}$ exists, 
thereby allowing $\binom{N}{m}$ number of values of LME, 
$E_{L,\mathcal{E}}^{n,\mathbf{r}_\alpha}$.  
Therefore, the LME, $E^{n,\mathbf{r}_\alpha}_{L,\mathcal{E}}$, 
is ``local'' in the sense that it changes with the 
choice of the set $\mathbf{r}_{\alpha}$. In light of this fact, one can also define a ``global'' value of the LME for a 
multipartite state with fixed values of $N$ and $m$ as 
\begin{eqnarray}
 E_{GL,\mathcal{E}}^{n}=\underset{\mathbf{R}}{\max}\{E_{L,\mathcal{E}}^{n,\mathbf{r}_\alpha}\}.
 \label{glob_lggm}
\end{eqnarray}
Note that if the initial state $|\Phi_N\rangle$ is a symmetric state, such maximization over $\mathbf{R}$ is not required. 
Note also that the relative positions of the parties labeled by ``$\mathbf{r}_\alpha$'', for a specific $\alpha$, 
with respect to each other as well as the rest of the parties does not affect the value of LME. 
From now on, without any loss of generality, we assume that the measurements are performed at 
$r_j=j$, $j=1,2,\cdots,m$, while $|\psi^l_n\rangle$ denotes the state of the 
remaining $n$ parties.

This definition can be extended to an arbitrary mixed state when the 
input state is an $N$-party state $\varrho_{1,2,\cdots,N}$, 
and the measurement is performed on any $m$ parties. The problem in this 
case remains with the choice of a computable multipartite  entanglement measure $\mathcal{E}$, which 
is defined for the mixed states. 
Although the notion of entanglement measures in multipartite systems 
is an active field of research, the number of such 
computable measures, even in the case of the pure states, is still limited. Another avenue to extend the LME to mixed states is via the convex-roof optimization. Convex-roof optimization is, however, typically difficult to perform, and has been successful in only a few instances. See \cite{eof} for examples within quantum information.

Reverting back to the case of pure states, in principle, 
an LME can be defined by using any one of the known candidates
for pure state multipartite entanglement measure as the seed measure, such as relative entropy of 
entanglement \cite{rel_ent_ent}, 
global entanglement \cite{glob_ent}, and other 
multipartite measures \cite{horodecki_rmp,other_ent}. 
However, in the present study, we focus on geometric measures (GM) of entanglement 
\cite{horodecki_rmp,geom_ent,geom_ent_manybody,geom_ent_ksep,geom_ent_book,ggm_first,ggmn} for multipartite pure states.
More specifically, we use the ``$K$-separability based GM'' (K-GM) to discuss the properties of the LME. 
For the purpose of computation, we choose the generalized geometric measure (GGM), which is 
computable for a multiparty pure state in arbitrary dimensions and for arbitrary number of parties. Short descriptions of 
these measures can be found in Appendix \ref{mult-mes}.

In this paper, we have confined ourselves to consideration of measurements that are local. It is possible to define a multipartite entanglement measure for which non-local measurements are allowed in the optimization. However, non-local measurements, even if performed on two parties at a time, can generate genuine multiparty entanglement, and therefore will result in difficulties in defining the monotonicity of the so-obtained measure under LOCC. Furthermore, parametrization of the non-local measurements via entangled bases require a large number of parameters that increase exponentially with the increase in the number of parties measured.

\subsection{Properties}
\label{properties}

We now prove several properties of LME. 
We use K-GM as the multipartite 
entanglement measure, and local projective measurements, in which case, Eq. (\ref{lme}) reads
\begin{eqnarray}
 E_{L}^{n,\mathbf{r}_\alpha}(|\Phi_N\rangle) = \underset{\mathcal{P}}{\sup}\sum_{l=1}^{d^{m}}p^{l}G_K
 \left(|\psi^{l}_{n}\rangle\right),
 \label{lgm}
\end{eqnarray}
where $\mathcal{P}\equiv\{P^l\}$ denotes a complete set of local projectors 
acting on the parties distinguished by $\mathbf{r}_\alpha$. 
From now onward, we discard the index $\mathcal{E}$. 
We start by looking into the bounds of the measure, which leads us to the following theorem. 

\begin{theorem}
For an arbitrary state $|\Phi_N\rangle$ describing a quantum system of $N$ parties, 
$0\leq E_L^{n,\mathbf{r}_{\alpha}}\leq g$, where $G_K\leq g$.
\label{th:bounds}
\end{theorem}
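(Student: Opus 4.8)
The plan is to prove the two inequalities in $0\leq E_L^{n,\mathbf{r}_\alpha}\leq g$ independently, each by an elementary argument; the only real content is that concentrating entanglement by local measurements can never report a value larger than whatever ceiling $g$ the seed measure $G_K$ already obeys.

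For the lower bound I would use only that any $K$-separability based geometric measure is non-negative on every state, being constructed as one minus a maximal squared overlap with the relevant set of states, so that $G_K(|\psi^l_n\rangle)\geq 0$ for each outcome $l$. Hence, for every complete set of (rank-one) local projectors $\mathcal{P}=\{P^l\}$ on the $d^m$-dimensional Hilbert space of the measured parties, the convex combination $\sum_l p^l\,G_K(|\psi^l_n\rangle)$ appearing in Eq. (\ref{lgm}) is non-negative. Since at least one admissible $\mathcal{P}$ exists (for instance the computational-basis measurement on each measured party), the supremum in Eq. (\ref{lgm}) is a supremum over a non-empty set of non-negative numbers, which already gives $E_L^{n,\mathbf{r}_\alpha}\geq 0$.

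For the upper bound I would fix an arbitrary admissible $\mathcal{P}=\{P^l\}$ and note that, because each $P^l$ is a rank-one projector on the joint space of the $m$ measured parties, the normalized post-measurement state $|\Psi^l_N\rangle=P^l|\Phi_N\rangle/\sqrt{p^l}$ factorizes into a pure state on the measured parties tensored with a pure state $|\psi^l_n\rangle$ on the remaining $n$ parties; consequently tracing out the $m$ measured parties returns exactly the pure state $|\psi^l_n\rangle$ on which $G_K$ is defined, which is precisely the pure-state post-measurement ensemble used in the definition of the LME. By the hypothesis that $g$ is an upper bound of $G_K$, each term satisfies $G_K(|\psi^l_n\rangle)\leq g$, and since $\sum_l p^l=1$ one gets $\sum_l p^l\,G_K(|\psi^l_n\rangle)\leq g\sum_l p^l=g$. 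This bound is uniform over the choice of $\mathcal{P}$, hence it survives the supremum, and $E_L^{n,\mathbf{r}_\alpha}\leq g$. Combining the two parts finishes the proof.

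I do not expect a genuine obstacle; the two points that deserve a line of care are (i) that the states $|\psi^l_n\rangle$ on the unmeasured parties be pure, which is why the statement is phrased for rank-one projective measurements and is what makes the seed measure $G_K$ directly applicable to them, and (ii) that $g$ be read as an upper bound for $G_K$ on $n$-party states; since geometric ($K$-separability based) measures attain their extremal value for states of any fixed number of parties $n\geq 2$, any universal upper bound for $G_K$ (e.g., $g=1$) suffices. I would also remark that the same two-line averaging computation carries over verbatim if the local projective measurements are replaced by POVMs or by LOCC with classical communication only among the $m$ measured parties, because after a rank-one refinement each branch again produces a pure $|\psi^l_n\rangle$ while the branch probabilities still sum to one; this is consistent with the ordering displayed in Eq. (\ref{relative}).
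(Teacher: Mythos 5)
Your proposal is correct: the lower bound follows from the non-negativity of $G_K$ and of the probabilities, and the upper bound from averaging $G_K(|\psi^l_n\rangle)\leq g$ over an ensemble with $\sum_l p^l=1$, a bound uniform in $\mathcal{P}$ and hence preserved by the supremum. The paper states Theorem \ref{th:bounds} without an explicit proof, treating it as immediate from the definition in Eq. (\ref{lgm}), and your two-line convexity argument (including the remark that rank-one local projectors leave the unmeasured parties in a pure state, so $G_K$ applies) is precisely the reasoning implicit there.
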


\noindent Theorem \ref{th:bounds} provides an upper bound of LME depending on the choice of the seed measure, provided an upper bound is known for the seed. Corollary \ref{cor:gm_bound} follows directly from Theorem \ref{th:bounds}, and the definition of $E^n_{GL}$ (Eq. (\ref{glob_lggm})).

\begin{corollary}
For an arbitrary state $|\Phi_N\rangle$ describing a quantum system of $N$ parties, 
$0\leq E^n_{GL}\leq g$, where $G_K\leq g$.
\label{cor:gm_bound} 
\end{corollary}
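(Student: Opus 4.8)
The plan is to derive the corollary directly from Theorem \ref{th:bounds} together with the defining equation (\ref{glob_lggm}) for $E^n_{GL}$. Recall that $E^n_{GL}=\max_{\mathbf{R}}\{E_{L}^{n,\mathbf{r}_\alpha}\}$; that is, it is the largest among the finitely many (namely $\binom{N}{m}$) quantities $E_L^{n,\mathbf{r}_\alpha}$, indexed by the possible choices $\mathbf{r}_\alpha\in\mathbf{R}$ of the set of measured parties. So the task is just to transfer the per-$\mathbf{r}_\alpha$ bound through a maximization.

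First I would invoke Theorem \ref{th:bounds}, which gives $0\leq E_L^{n,\mathbf{r}_\alpha}\leq g$ for every fixed $\mathbf{r}_\alpha$, where $g$ is any known upper bound of the seed measure $G_K$. Since this holds for each index $\alpha$ separately, it holds in particular for the index $\alpha^\star$ that attains the maximum in (\ref{glob_lggm}); hence $E^n_{GL}=E_L^{n,\mathbf{r}_{\alpha^\star}}\leq g$. For the lower bound, a maximum of a nonempty finite collection of nonnegative numbers is itself nonnegative, so $E^n_{GL}\geq 0$. Combining the two inequalities yields $0\leq E^n_{GL}\leq g$, which is the claim.

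I do not expect a genuine obstacle here: the statement is a one-line consequence of the monotonicity of the $\max$ operation applied to the bound established in Theorem \ref{th:bounds}. The only point worth noting explicitly is that the index set $\mathbf{R}$ is nonempty and finite (guaranteed by $n\leq N-1$, i.e.\ $m\geq 1$, so that $\binom{N}{m}\geq 1$), which ensures the maximum is actually attained and the bound transfers without any supremum or limiting subtleties.
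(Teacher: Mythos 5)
Your argument is correct and is exactly the route the paper takes: the corollary is noted there to follow directly from Theorem \ref{th:bounds} applied to each $E_L^{n,\mathbf{r}_\alpha}$ together with the definition of $E^n_{GL}$ in Eq.~(\ref{glob_lggm}), i.e.\ transferring the bound through the finite maximization over $\mathbf{R}$. Your additional remark that $\mathbf{R}$ is nonempty and finite is a harmless (and accurate) piece of bookkeeping that the paper leaves implicit.
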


\noindent Note that for $K=2$, $G_K\equiv\mathcal{G}$ for an $N$-qubit system, and the above property implies  
$E^{n,\mathbf{r}_\alpha}_L\leq 1/2$, since $\mathcal{G}\leq 1/2$.  
Our next theorem is on the effect of local unitary (LU) operations on LME.

\begin{theorem}
$E_L^{n,\mathbf{r}_\alpha}$ remains invariant under local unitary transformations.
\label{th:unitary}
\end{theorem}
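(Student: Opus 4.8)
The plan is to show that replacing $|\Phi_N\rangle$ by $V|\Phi_N\rangle$, with $V=U_1\otimes U_2\otimes\cdots\otimes U_N$ a local unitary, merely relabels the optimization in Eq.~(\ref{lgm}) without changing the objective value. First I would split $V=V_m\otimes V_n$, where $V_m=U_1\otimes\cdots\otimes U_m$ acts on the measured parties $\mathbf{r}_\alpha$ and $V_n=U_{m+1}\otimes\cdots\otimes U_N$ on the $n$ remaining parties. For a complete set of local projectors $\mathcal{P}=\{P^l\}$ on the measured parties, the measurement operator $P^l\otimes I_n$ applied to $V|\Phi_N\rangle$ rewrites as $(V_m\otimes V_n)(Q^l\otimes I_n)|\Phi_N\rangle$ with $Q^l=V_m^{\dagger}P^lV_m$.

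Next I would check that the post-measurement ensemble transforms covariantly. The outcome probability $p'^l=\langle\Phi_N|V^{\dagger}(P^l\otimes I_n)V|\Phi_N\rangle$ equals $\langle\Phi_N|(Q^l\otimes I_n)|\Phi_N\rangle$, i.e.\ the probability of outcome $l$ when one instead measures $\{Q^l\}$ on $|\Phi_N\rangle$, and the normalized post-measurement state is $V|\Psi^l_N\rangle$, with $|\Psi^l_N\rangle$ the corresponding post-measurement state of $|\Phi_N\rangle$ under $\{Q^l\}$. Tracing out the $m$ measured parties and using $\mathrm{Tr}_{1\cdots m}[(V_m\otimes I)X(V_m^{\dagger}\otimes I)]=\mathrm{Tr}_{1\cdots m}[X]$ (unitarity of $V_m$ on the traced-out subsystem), together with pulling $V_n$ out of the partial trace, yields $|\psi'^l_n\rangle=V_n|\psi^l_n\rangle$ (equivalently $\psi'^l_n=V_n\psi^l_nV_n^{\dagger}$ at the operator level): the $n$-party states differ only by the local unitary $V_n$.

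Then I would invoke the local-unitary invariance of the seed measure $G_K$ (a standard property of the $K$-separability based geometric measure, recalled in Appendix~\ref{mult-mes}) to conclude $G_K(\psi'^l_n)=G_K(V_n\psi^l_nV_n^{\dagger})=G_K(\psi^l_n)$ for every $l$, hence $\sum_l p'^lG_K(\psi'^l_n)=\sum_l p^lG_K(\psi^l_n)$. Finally I would take the supremum: since $V_m$ is itself a tensor product of single-party unitaries, the map $P^l\mapsto V_m^{\dagger}P^lV_m$ is a bijection from the set of complete local projective measurements on $\mathbf{r}_\alpha$ onto itself (locality of each projector and the resolution of identity $\sum_l Q^l=I$ are both preserved). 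Therefore the supremum over $\{P^l\}$ for $V|\Phi_N\rangle$ and the supremum over $\{Q^l\}$ for $|\Phi_N\rangle$ coincide, giving $E_L^{n,\mathbf{r}_\alpha}(V|\Phi_N\rangle)=E_L^{n,\mathbf{r}_\alpha}(|\Phi_N\rangle)$.

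The main obstacle I anticipate is not any single computation but making the bijection argument airtight: one must verify that conjugation by the local unitary $V_m$ keeps the measurement both \emph{local} (a product of projectors on each measured party stays a product, because $V_m$ factorizes party-wise) and \emph{complete}, so that the optimization domain in Eq.~(\ref{lgm}) is mapped exactly onto itself rather than into a strictly smaller or larger class; the partial-trace manipulation and the appeal to $G_K$'s LU-invariance are then routine. The same reasoning extends verbatim to the POVM and LOCC variants of Eq.~(\ref{relative}), since conjugation by a local unitary also maps local POVMs to local POVMs and LOCC instruments to LOCC instruments, and hence the invariance holds for $E_{GL}^n$ as well via Eq.~(\ref{glob_lggm}).
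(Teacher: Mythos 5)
Your proof is correct: the covariant transformation of the post-measurement ensemble ($p'^l=p^l$, $\psi'^l_n=V_n\psi^l_nV_n^{\dagger}$), the LU-invariance of the seed measure $G_K$, and the observation that conjugation by the party-wise factorized $V_m$ is a bijection on complete local projective measurements together give exactly the standard argument, which is the one the paper implicitly relies on (it states Theorem \ref{th:unitary} without spelling out a proof). No gaps; your remarks on extending the argument to POVM/LOCC and to $E^n_{GL}$ are also sound.
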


\noindent To determine the criteria for vanishing $E^{n,\mathbf{r}_\alpha}_L$, we intend to 
characterize the set of multipartite states, 
$\{|\Phi_N\rangle\}$, for which $E_{L}^{n,\mathbf{r}_\alpha}=0$. We first consider the following theorem.

\begin{theorem}
For a $K$-separability based GM, a non-zero value of $E^{n,\mathbf{r}_\alpha}_L$
for an $N$-partite pure state, $|\Phi_N\rangle$, 
is possible with $n=N-m$, $m$ being the number of parties in which measurement is performed, only when
the separability, $M$, of the original state is such that $1\leq M\leq K+m-1$.
\label{th:conditions} 
\end{theorem}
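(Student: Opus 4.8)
The plan is to establish the contrapositive of the upper bound: if the separability of $|\Phi_N\rangle$ satisfies $M\geq K+m$, then $G_K(|\psi^l_n\rangle)=0$ for every outcome $l$ and every choice of measured set $\mathbf{r}_\alpha$, so the average in Eq.~(\ref{lgm}) vanishes and $E_L^{n,\mathbf{r}_\alpha}=0$. Since every $N$-party state has separability at least $1$, the lower bound $M\geq 1$ is automatic, and the two facts together give the claimed range $1\leq M\leq K+m-1$. I would invoke the defining property of a $K$-separability based GM, namely that $G_K(|\psi\rangle)=0$ exactly when $|\psi\rangle$ is $K$-separable, together with the elementary observation that an $M'$-separable state with $M'\geq K$ is \emph{a fortiori} $K$-separable (merge factors). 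So the whole task reduces to showing that, under the hypothesis $M\geq K+m$, each post-measurement state $|\psi^l_n\rangle$ is a product of at least $K$ parts.

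To see this, fix the finest product decomposition $|\Phi_N\rangle=\bigotimes_{j=1}^{M}|\phi_j\rangle$ across a partition $\{A_1,\dots,A_M\}$ of the $N$ parties. The crucial structural remark is that a local measurement on the $m$ parties in $\mathbf{r}_\alpha$ is implemented by operators $P^l$ that are tensor products of single-party projectors over those parties; since each measured party lies in exactly one block $A_j$, such an operator factorizes across $\{A_1,\dots,A_M\}$, and hence the post-measurement state $|\Psi^l_N\rangle\propto P^l|\Phi_N\rangle$ remains a product across these blocks. Call a block ``touched'' if it contains a measured party; as $|\mathbf{r}_\alpha|=m$, at most $m$ blocks are touched. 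Tracing out the measured parties leaves the factors on the untouched blocks unchanged and replaces each touched block by a (possibly empty) state on its surviving sub-block, while preserving the global tensor-product structure; therefore $|\psi^l_n\rangle$ is a product of at least $M-m\geq K$ parts, so $G_K(|\psi^l_n\rangle)=0$. For the complementary direction --- that the range is genuinely necessary and not vacuous --- I would, when $M\leq K+m-1$, exhibit a set $\mathbf{r}_\alpha$ and a rank-one local projective measurement for which some $|\psi^l_n\rangle$ is a product of at most $K-1$ parts and hence has $G_K>0$; combined with a nonzero outcome probability this forces $E_L^{n,\mathbf{r}_\alpha}>0$, which is a matter of explicit book-keeping.

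The step I expect to require the most care is the claim that tracing out the measured parties retains a clean tensor-product (hence $K$-separable) structure on the remaining $n$ parties. For rank-one local projectors this is immediate, because a rank-one projector applied inside a block factors the measured party off that block as a product. For higher-rank projective elements the reduction onto a touched block may be mixed, so one must either argue that it is enough to restrict the supremum in Eq.~(\ref{lgm}) to rank-one projectors, or observe that $G_K$ vanishes on any state --- pure or mixed --- that is a product of at least $K$ parts, a property the relevant reductions inherit from the block decomposition. Once this point is settled, what remains is only the counting estimate $M-(\text{number of touched blocks})\geq M-m$.
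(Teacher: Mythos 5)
Your argument is correct and is essentially the paper's own proof run in the contrapositive: both rest on the same counting, namely that a local measurement on the $m$ parties can touch at most $m$ blocks of the finest product decomposition, so the unmeasured part retains at least $M-m$ product factors (equivalently $M\leq K'+m$ with $K'\leq K-1$), which gives $M\leq K+m-1$. The ``complementary direction'' you sketch is not required --- the theorem asserts only necessity, and for a fixed $\mathbf{r}_\alpha$ sufficiency can in fact fail (e.g.\ measuring a party inside the lone entangled block of a state such as $|\phi_{12}\rangle\otimes|\phi_3\rangle$) --- but this does not affect your proof of the stated result.
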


\begin{proof}
To prove the above theorem, let us first assume that the 
$N$-partite system is composed of two partitions denoted
by $A$, and $B$. The first one consists of the $m$ parties over 
which local measurements are performed, while the rest $n=N-m$ parties
construct the partition $B$. For the value of $E_L^{n,\mathbf{r}_\alpha}$ 
to be non-zero, at least one of the states, $\{\psi^l_n\}$, of 
the post-measurement ensemble consisting of $d^m$ states must not be $K$-separable, so that $G_K(|\psi^l_n\rangle)\neq0$. 
This implies that the state $|\psi^l_{n}\rangle$ is allowed to be $K^\prime$-separable, where $1\leq K^\prime\leq K-1$. 
Moreover, irrespective of whether the partitions $A$ and $B$ share entanglement 
among each other, the possible separability of the 
partition $A$ consisting of $m$ parties dictates that the original state, $|\Phi_N\rangle$, is allowed to be 
$(K^\prime+j)$-separable, where $1\leq j\leq m$. Combining these two results, one obtains the allowed range of $M$ 
as $1$ to $K+m-1$. 
\end{proof}

\noindent All the characteristics of LME discussed above remains unchanged if 
K-GM is replaced by GGM as the genuine multipartite entanglement measure. 

\section{LGGM in multiparty quantum states}

\subsection{Single-qubit measurement: LGGM vs. GGM}
\label{sqm}

Let us consider an $N$-party pure state in $(\mathbb{C}^2)^{\otimes N}$. 
The local projective measurement, $\mathcal{P}$, on a qubit $j$ ($j=1,2,\cdots,N$), 
can be represented by a complete set of rank-$1$ projectors, $\{\Pi^l_j\}$, such that 
$\Pi^l_j=|\xi^l_j\rangle\langle\xi^l_j|\otimes\mathbb{I}_{N-1}$, $l=1,2$, is given by 
\begin{eqnarray}
 |\xi^{1}_j\rangle&=& c_{\theta_j/2}|0\rangle+e^{i\phi_j}s_{\theta_j/2}|1\rangle,\nonumber \\
 |\xi^{2}_j\rangle&=& -s_{\theta_j/2}e^{-i\phi_j}|0\rangle+c_{\theta_j/2}|1\rangle,
 \label{projector}
\end{eqnarray}
with $0\leq\theta_j\leq\pi$, $0\leq\phi_j<2\pi$, and $c_x$ and $s_x$ stand for $\cos x$ and $\sin x$ respectively. 
Here, $\mathbb{I}_{N-1}$ is the identity operator in the Hilbert space of the $N-1$ qubits, 
and $\{|0\rangle,|1\rangle\}$ is the computational basis in the qubit 
Hilbert space. 
In this representation, the supremum involved in the definition of LME is obtained by
performing a maximization over the space of the real parameters, $(\theta_j,\phi_j)$. 
If local measurements are performed over a collection of 
qubits denoted by $\mathbf{r}\equiv\{r_j\}$, $j=1,2,\cdots,m$ with
$m>1$, the supremum has to be obtained over a total of $2m$ real parameters, $(\theta_{r_j},\phi_{r_j})$. 
Using GGM as the seed measure, Eq. (\ref{lme}) takes up a simpler form, given by
\begin{eqnarray}
E_{L}^{r} = \underset{\mathcal{P}}{\mbox{sup}}\sum_{l=1}^{2}p^{l}\mathcal{G}(|\psi^l_{N-1}\rangle).
\label{lmen}
\end{eqnarray}
Note that in Eq. (\ref{lmen}), we have discarded the superscript $n$ since it has a constant value $n=N-1$ in the 
present case. To keep the notations uncluttered, we also replace $\mathbf{r}_\alpha$ with the position index, $r$, since 
$r$ can now have $N$ possible values, i.e.,  $r=1,2,\cdots,N$. In this section, and in the rest of the paper, 
unless otherwise stated, we always consider local measurement over the first qubit of the system in the case of a single-qubit measurement.


\noindent\textbf{\textit{Generalized GHZ state.}} The first example that we consider is the $N$-qubit gGHZ state,  given by \cite{ghzstate}
$|GHZ_{N}\rangle_g=a_1|0\rangle^{\otimes N}+a_2|1\rangle^{\otimes N}$,
where $a_1$ and $a_2$ are complex numbers with $|a_1|^2+|a_2|^2=1$. Without any loss of generality, let us assume that
$|a_1|^2\geq\frac{1}{2}\geq|a_2|^2$. Since $|GHZ_N\rangle_g$ is symmetric under swapping of parties, $E_{L}^{r}=E_L$ for $r=1,2,\cdots,N$. 
The following proposition is for the LGGM of $N$-qubit gGHZ states (see Appendix \ref{prop_proof} for the proof).
\begin{proposition}
For the $N$-qubit gGHZ state, $E^1_L=\mathcal{G}$.
\label{pr:ghz}
\end{proposition}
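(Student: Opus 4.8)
The plan is to evaluate the supremum in Eq.~(\ref{lmen}) explicitly and show it equals the GGM of $|GHZ_N\rangle_g$. First recall that for the $N$-qubit gGHZ state every bipartite split $\{S,\bar S\}$ has Schmidt coefficients $|a_1|,|a_2|$, so $\mathcal{G}=1-\max(|a_1|^2,|a_2|^2)=|a_2|^2$ under the assumed ordering $|a_1|^2\ge\frac{1}{2}\ge|a_2|^2$. Since $|GHZ_N\rangle_g$ is permutation symmetric, it suffices to measure the first qubit, and the rank-one projectors of Eq.~(\ref{projector}), parametrized by $(\theta,\phi)$, exhaust the complete projective measurements on that qubit.

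Next I would apply the two projectors to $|GHZ_N\rangle_g$ and read off the normalized post-measurement states. A short calculation shows that outcome $l$ again produces an $(N-1)$-qubit gGHZ state $b_1^{(l)}|0\rangle^{\otimes(N-1)}+b_2^{(l)}|1\rangle^{\otimes(N-1)}$, with, e.g., $|b_1^{(1)}|^2=c_{\theta/2}^2|a_1|^2/p^1$, $|b_2^{(1)}|^2=s_{\theta/2}^2|a_2|^2/p^1$ and $p^1=c_{\theta/2}^2|a_1|^2+s_{\theta/2}^2|a_2|^2$ (the $l=2$ outcome being obtained by $c_{\theta/2}\leftrightarrow s_{\theta/2}$). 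Reusing the gGHZ formula on $N-1$ qubits, $\mathcal{G}(|\psi^l_{N-1}\rangle)=\min(|b_1^{(l)}|^2,|b_2^{(l)}|^2)$, the normalization $p^l$ cancels and
\begin{eqnarray}
\sum_{l=1}^{2}p^l\mathcal{G}(|\psi^l_{N-1}\rangle)=\min\!\big(c_{\theta/2}^2|a_1|^2,\,s_{\theta/2}^2|a_2|^2\big)+\min\!\big(s_{\theta/2}^2|a_1|^2,\,c_{\theta/2}^2|a_2|^2\big),\nonumber
\end{eqnarray}
which is manifestly independent of $\phi$, so the supremum is over $\theta$ alone.

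Then I would reduce this to a one-variable optimization. Writing $x=c_{\theta/2}^2\in[0,1]$, $p=|a_1|^2$, $q=|a_2|^2=1-p$ with $p\ge\frac{1}{2}\ge q$, the objective is $f(x)=\min(xp,(1-x)q)+\min((1-x)p,xq)$. Using $p+q=1$ one has $xp\le(1-x)q\iff x\le q$ and $(1-x)p\le xq\iff x\ge p$, so $[0,1]$ splits into $[0,q]$, $[q,p]$, $[p,1]$, on which $f$ equals $x$, $q$, and $1-x$ respectively. Hence $\max_x f(x)=q=\mathcal{G}$, attained on the whole plateau $x\in[q,p]$ (which contains the equatorial measurement $\theta=\pi/2$), giving $E^1_L=\mathcal{G}$.

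The computation is essentially routine; the only points needing care are (i) checking that each post-measurement state is genuinely again of gGHZ form, so that the closed-form GGM can be reused — this requires $n=N-1\ge2$ — and (ii) the boundary values of $\theta$ at which some $p^l$ vanishes, handled by continuity since such points cannot beat the interior plateau. I do not expect a serious obstacle: the substance of the proposition is precisely the clean cancellation of the normalization factors together with the elementary piecewise-linear maximization.
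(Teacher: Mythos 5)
Your proposal is correct and follows essentially the same route as the paper's proof: measure qubit $1$ with the $(\theta,\phi)$-parametrized rank-$1$ projectors, note that both outcomes yield $(N-1)$-qubit gGHZ states so the closed-form GGM applies and the normalizations cancel, and reduce to a one-parameter optimization whose optimum is $|a_2|^2=\mathcal{G}$. Your piecewise-linear analysis in $x=c_{\theta/2}^2$ (with the plateau on $[q,p]$) is just the complementary, $\min$-based bookkeeping of the paper's $\max$-identity and its three $\theta$-subregions.
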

\noindent The value of LGGM remains unchanged in the case of the gGHZ state if measurement is performed over a higher number of qubits 
$(m>1)$. However, in subsequent discussions, we shall be providing examples of multipartite quantum states for which the 
situation is different. As a special case of the $N$-qubit gGHZ state, the LGGM for the GHZ state 
$(a_1=a_2=1/\sqrt{2})$ of $N$ qubits can be obtained as $E_L^1=1/2$.

\begin{figure}
 \includegraphics[scale=0.325]{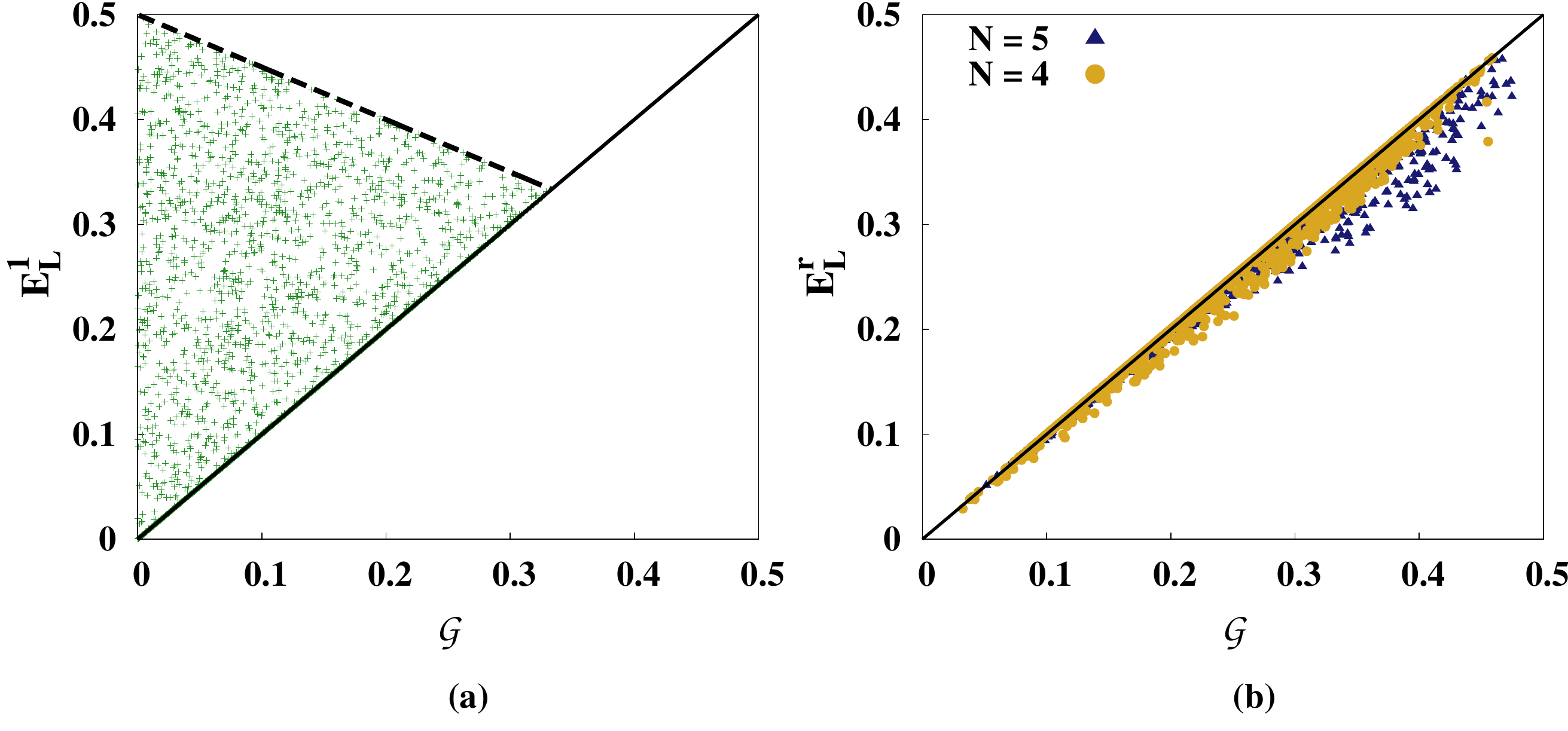}
 \caption{(Color online.) (a) Plot of $E_{L}^1$ vs. $\mathcal{G}$ for three-qubit gW states. 
 To obtain the scatter diagram, $10^5$ three-qubit gW states are generated Haar-uniformly. 
 The solid line represents the line $E_L^1=\mathcal{G}$, while the dashed line correspond to 
 $2E_L^1+\mathcal{G}=1$. (b) Plot of $E_{L}^1$ vs. $\mathcal{G}$ in the case of Haar-uniformly generated 
 generalized superposition of Dicke states, as given in 
 Eq. (\ref{gen_sup_dicke}). To obtain the scatter diagram, $10^5$ states of the 
 form $|D_g^N\rangle$ are generated Haar-uniformly  
 for each of the cases $N=4$ and $N=5$.
 All quantities plotted in both figures are dimensionless.}
 \label{genw_sym}
\end{figure}

\begin{table}
 \begin{tabular}{|c|c|c|c|c|c|c|}
 \hline
 \cellcolor{blue!10} \#\, & \cellcolor{green!10} \textbf{Ordering} 
 & \cellcolor{red!10} $\mathbf{E_{L}^1}$ &  \cellcolor{red!10} $\mathbf{E_{L}^2}$ & 
 \cellcolor{red!10} $\mathbf{E_{L}^3}$ & \cellcolor{cyan!10} 
 $\mathbf{\mathcal{G}}$ & \cellcolor{yellow!10} \textbf{Partition}  \\
 \hline 
 $1$ & $|a_1|^2\geq|a_2|^2\geq|a_3|^2$ & $|a_2|^2$ & $|a_3|^2$ & $|a_3|^2$ & $|a_3|^2$ & 1:23 \\
 \hline 
 $2$ & $|a_1|^2\geq|a_3|^2\geq|a_2|^2$ & $|a_2|^2$ & $|a_3|^2$ & $|a_2|^2$ & $|a_2|^2$ & 2:13 \\
 \hline
 $3$ & $|a_2|^2\geq|a_1|^2\geq|a_3|^2$ & $|a_1|^2$ & $|a_3|^2$ & $|a_3|^2$ & $|a_3|^2$ & 1:23 \\
 \hline
 $4$ & $|a_2|^2\geq|a_3|^2\geq|a_1|^2$ & $|a_1|^2$ & $|a_1|^2$ & $|a_3|^2$ & $|a_1|^2$ & 3:12 \\
 \hline
 $5$ & $|a_3|^2\geq|a_1|^2\geq|a_2|^2$ & $|a_2|^2$ & $|a_1|^2$ & $|a_2|^2$ & $|a_2|^2$ & 2:13 \\
 \hline
 $6$ & $|a_3|^2\geq|a_2|^2\geq|a_1|^2$ & $|a_1|^2$ & $|a_1|^2$ & $|a_2|^2$ & $|a_1|^2$ & 2:13 \\ 
 \hline
 \end{tabular}
\caption{Different orderings of $\{|a_1|^2,|a_2|^2,|a_3|^2\}$ and corresponding values of $E_L^r$ ($r=1,2,3$), 
and $\mathcal{G}$ in the case of three-qubit gW state. For all orderings, 
$E_L^1\geq\mathcal{G}$. The last column shows the bipartition 
from which the maximum Schmidt coefficient is obtained.}
\label{tab1}
\end{table}

\noindent\textbf{\textit{Generalized W state.}} Our next example is the $N$-qubit gW state, given by \cite{zhgstate,W-vs-GHZ,dvc}
$|W_N\rangle_g=\sum_{i=1}^{N}a_i|0\rangle^{\otimes(i-1)}|1\rangle_i|0\rangle^{\otimes(N-i)}$,
where $\{a_i\}$, $i=1,2,\cdots,N$, are complex numbers such that $\sum_{i=1}^{N}|a_i|^2=1$. 
Note that unlike the gGHZ state, the gW state is not symmetric under swapping of parties, which leads to a collection of 
$N$ values of LGGM, $\{E_{L}^r\}$, $r=1,2,\cdots,N$. The GGM of the state, in the present case, is given by 
$\mathcal{G}=\min\{|a_i|^2\}$, where $i=1,2,3$. For the purpose of demonstration, we start 
with the case of $N=3$, for which the relation between $E_{L}^r$ and $\mathcal{G}$ is given by the following proposition 
(see Appendix \ref{prop_proof} for the proof).
\begin{proposition}
For an arbitrary three-qubit gW state, $E_{L}^r\geq \mathcal{G}$ for all values of $r$.
\label{pr:3gw}
\end{proposition}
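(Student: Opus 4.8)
The plan is to exploit that $E_L^r$ in Eq.~(\ref{lmen}) is a \emph{supremum} over local projective measurements on qubit $r$, so that evaluating the average post-measurement GGM for one conveniently chosen measurement already supplies the required lower bound. The measurement I would use is the computational-basis one on qubit $r$, obtained by setting $\theta_r=0$ in Eq.~(\ref{projector}), i.e.\ the complete set $\{|0\rangle\langle0|,|1\rangle\langle1|\}\otimes\mathbb{I}_{2}$ (the phase $\phi_r$ is irrelevant for this choice). I would also simply invoke the formula stated just above the proposition, namely that for the three-qubit gW state $\mathcal{G}=\min\{|a_1|^2,|a_2|^2,|a_3|^2\}$.

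Next I would write out the two branches of this measurement, with $\{j,k\}=\{1,2,3\}\setminus\{r\}$. The outcome $|0\rangle$ on qubit $r$ occurs with probability $1-|a_r|^2=|a_j|^2+|a_k|^2$ and leaves the remaining two qubits in the normalized state $(a_j|10\rangle+a_k|01\rangle)/\sqrt{|a_j|^2+|a_k|^2}$, which is already in Schmidt form, so its GGM equals $\min\{|a_j|^2,|a_k|^2\}/(|a_j|^2+|a_k|^2)$. The outcome $|1\rangle$ on qubit $r$ occurs with probability $|a_r|^2$; then only the $i=r$ term of $|W_3\rangle_g$ survives, the remaining two qubits are left in the product state $|00\rangle$, and the corresponding GGM is $0$.

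Combining the branches, the average GGM for this particular measurement is $(|a_j|^2+|a_k|^2)\cdot\min\{|a_j|^2,|a_k|^2\}/(|a_j|^2+|a_k|^2)+|a_r|^2\cdot0=\min\{|a_j|^2,|a_k|^2\}$. Since this is a lower bound on the supremum in Eq.~(\ref{lmen}), $E_L^r\ge\min\{|a_j|^2,|a_k|^2\}$; and because the minimum of two of the three numbers $|a_1|^2,|a_2|^2,|a_3|^2$ is at least their overall minimum, $\min\{|a_j|^2,|a_k|^2\}\ge\min\{|a_1|^2,|a_2|^2,|a_3|^2\}=\mathcal{G}$. This holds for every $r\in\{1,2,3\}$, which is the claim.

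I do not foresee a substantial obstacle on this route; the only care needed is with the degenerate cases where $|a_r|^2=1$ or one of $a_j,a_k$ vanishes, in which case both $E_L^r$ and $\mathcal{G}$ collapse to $0$ and the inequality is trivially saturated. The genuinely harder (and separate) problem is to pin down the \emph{exact} value of $E_L^r$ by carrying out the optimization over $\theta_r$: the computational-basis measurement is generically not optimal, so it gives a bound looser than the true $E_L^r$ (compare the entries of Table~\ref{tab1}), but it is more than enough for the inequality stated here.
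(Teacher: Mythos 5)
Your argument is correct, and it shares its computational core with the paper's proof: both evaluate the post-measurement ensemble of the gW state for the computational-basis measurement on qubit $r$, obtaining the branch $(a_j|10\rangle+a_k|01\rangle)/\sqrt{|a_j|^2+|a_k|^2}$ with the complementary branch being a product state, so that the average GGM is $\min\{|a_j|^2,|a_k|^2\}$, and both then compare this with $\mathcal{G}=\min_i\{|a_i|^2\}$ by running over the possible orderings. The difference is in the logical structure. The paper sets up the full one-parameter family of projective measurements, writes the average GGM as $\tfrac12\bigl(1-\min_\theta\sum_l\sqrt{f^l(\theta)}\bigr)$, argues that the minimum of $f(\theta)$ sits at $\theta=0,\pi$, and thereby obtains the \emph{exact} value $E_L^r=\min\{|a_j|^2,|a_k|^2\}$ before comparing with $\mathcal{G}$; this exact value is what feeds Table~\ref{tab1} and the subsequent Observation. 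You instead use the supremum structure of Eq.~(\ref{lmen}) to treat the computational-basis measurement as a witness, which yields only the inequality $E_L^r\geq\min\{|a_j|^2,|a_k|^2\}\geq\mathcal{G}$ but entirely avoids the analytic optimality claim about $f(\theta)$ — for the proposition as stated, your route is leaner and arguably cleaner, since the only nontrivial analytic step of the paper's proof becomes unnecessary. One aside in your write-up should be corrected: for three-qubit gW states the computational basis is in fact the \emph{optimal} measurement (this is precisely what the paper's minimization of $f(\theta)$ establishes), so your lower bound is actually tight, $E_L^r=\min\{|a_j|^2,|a_k|^2\}$; the apparent mismatch you perceived with Table~\ref{tab1} stems from a different labeling convention of the coefficients relative to the measured qubit there, not from suboptimality of the computational basis.
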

\noindent The following corollary regarding the lower bound of the ``global'' LGGM, as defined in Eq. (\ref{glob_lggm}), 
can be obtained directly from Proposition \ref{pr:3gw}.
\begin{propcorollary}
For the tripartite gW state, the global LGGM, $E_{GL}\geq\mathcal{G}$.
\label{pc:lbound}
\end{propcorollary}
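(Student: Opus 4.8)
The plan is to read the corollary off directly from Proposition~\ref{pr:3gw} together with the definition of the global LGGM in Eq.~(\ref{glob_lggm}). First I would specialize that definition to the present setting: for a three-qubit pure state with a single-qubit measurement we have $N=3$, $m=1$, hence $n=2$, and the family $\mathbf{R}$ of admissible measurement positions consists of the three one-element sets $\mathbf{r}_1=\{1\}$, $\mathbf{r}_2=\{2\}$, $\mathbf{r}_3=\{3\}$. Consequently $E_{GL}=\max_{r\in\{1,2,3\}}E_L^{r}$, i.e.\ a maximum over a finite collection of real numbers.

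Next I would invoke Proposition~\ref{pr:3gw}, which guarantees $E_L^{r}\geq\mathcal{G}$ for each of $r=1,2,3$. Since the maximum of finitely many real numbers is never smaller than any one of them, it follows that $E_{GL}=\max_{r}E_L^{r}\geq E_L^{1}\geq\mathcal{G}$, which is the assertion. (Equivalently: a maximum of a collection of numbers each bounded below by $\mathcal{G}$ is itself bounded below by $\mathcal{G}$.)

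There is essentially no obstacle here. All the analytic content — the optimization over local projective measurements that yields the explicit values of $E_L^{r}$ recorded in Table~\ref{tab1} — has already been dispatched in the proof of Proposition~\ref{pr:3gw}, and the passage from the ``local'' quantities $E_L^{r}$ to the ``global'' quantity $E_{GL}$ is nothing more than the elementary order property of the maximum. The only point that warrants a moment's care is the bookkeeping: verifying that the index set over which Eq.~(\ref{glob_lggm}) maximizes coincides exactly with the set of values of $r$ to which Proposition~\ref{pr:3gw} applies, which is immediate once one notes that for $N=3$, $m=1$ there are precisely three single-qubit measurement choices.
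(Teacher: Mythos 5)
Your argument is exactly the one the paper intends: the corollary is read off directly from Proposition~\ref{pr:3gw} by noting that $E_{GL}$ is the maximum of the finitely many quantities $E_L^{r}$, $r=1,2,3$, each of which is bounded below by $\mathcal{G}$. The proof is correct and matches the paper's route; no further comment is needed.
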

\noindent In the case of $N$-qubit gW states with $N>3$, the situation is more involved. The difficulty in 
determining the values of GGM of the states $|\psi^l\rangle$ in the post-measurement ensemble, which now correspond to 
$N-1$ qubits or less, makes  analytical optimization of LGGM rather difficult. 
However, motivated by the Proposition \ref{pr:3gw}, we intend to check whether such a lower bound of LGGM exists if 
one considers an $N$-qubit gW state, when measurement is 
performed only on a single qubit. This leads us to Proposition \ref{pr:gw_ubound}. The proof of the Proposition is given in 
Appendix \ref{prop_proof}.
\begin{proposition}
For an arbitrary gW state of $N$-qubits, $E_{L}^r\geq\mathcal{G}$ for 
all values of $r$, provided measurement is performed only on a single qubit.
\label{pr:gw_ubound}
\end{proposition}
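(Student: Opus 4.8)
The plan is to prove the inequality by exhibiting one explicit complete measurement whose averaged post-measurement GGM already equals $\mathcal{G}$; since $E_L^r$ is defined as a supremum over all complete sets of local projectors, any particular choice furnishes a lower bound. By relabelling the qubits it suffices to treat measurement on qubit $1$, and we may assume $\mathcal{G}>0$ (the claim being trivial otherwise), which forces all $a_i\neq 0$ and hence $1-|a_1|^2>0$.

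First I would record the GGM of an arbitrary gW state on any number of qubits. For a gW state $\sum_{j}b_j\,|0\rangle^{\otimes(j-1)}|1\rangle_j|0\rangle^{\otimes(M-j)}$ on $M$ qubits, a direct computation shows that across a bipartition $S:\bar S$ the state takes the form $|u\rangle_S|0\rangle_{\bar S}+|0\rangle_S|v\rangle_{\bar S}$ with $|u\rangle,|v\rangle$ supported on the one-excitation sectors of $S$ and $\bar S$ respectively (hence orthogonal to the corresponding $|0\rangle$'s), so the (squared) Schmidt coefficients in that cut are $\{\sum_{i\in S}|b_i|^2,\ \sum_{i\in \bar S}|b_i|^2\}$. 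The largest Schmidt coefficient over all bipartitions is therefore $1-\min_j|b_j|^2$, attained by isolating the lightest qubit, and thus $\mathcal{G}=\min_j|b_j|^2$; in particular this recovers the formula $\mathcal{G}(|W_N\rangle_g)=\min_i|a_i|^2$ quoted above.

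Next I would apply the computational-basis measurement $\{|0\rangle\langle 0|,|1\rangle\langle 1|\}$ on qubit $1$, which is the $\theta_1=0$ instance of Eq.~(\ref{projector}). The outcome $|1\rangle$ has probability $|a_1|^2$ and projects the remaining $N-1$ qubits onto the product state $|0\rangle^{\otimes(N-1)}$, contributing GGM $0$. The outcome $|0\rangle$ has probability $1-|a_1|^2$ and leaves the remaining qubits in the normalised gW state $(1-|a_1|^2)^{-1/2}\sum_{i=2}^{N}a_i\,|0\rangle^{\otimes(i-2)}|1\rangle_i|0\rangle^{\otimes(N-i)}$, whose GGM, by the formula above, is $\min_{i\geq 2}|a_i|^2/(1-|a_1|^2)$. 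The averaged GGM for this measurement is hence $(1-|a_1|^2)\cdot\min_{i\geq 2}|a_i|^2/(1-|a_1|^2)+|a_1|^2\cdot 0=\min_{i\geq 2}|a_i|^2\geq \min_{i\geq 1}|a_i|^2=\mathcal{G}$, giving $E_L^1\geq\mathcal{G}$. The identical computation with qubit $r$ singled out yields $E_L^r\geq\mathcal{G}$ for every $r$.

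The only substantive ingredient is the GGM formula for gW states, so that is where I expect the (routine) care to lie: one must check that among all bipartitions the ``isolate the lightest qubit'' choice maximises the largest Schmidt coefficient, which follows because every bipartition of a W-type state collapses to a two-term Schmidt decomposition with coefficients equal to the excitation weights of the two blocks. Beyond that the proof is a one-line probability average. Note that this argument deliberately establishes only the bound $E_L^r\ge\mathcal{G}$; as the tripartite table already indicates, the optimal measurement in the definition of $E_L^r$ is in general \emph{not} the computational-basis one, so the exact value of $E_L^r$ for $N>3$ is not pursued here.
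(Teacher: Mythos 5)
Your proof is correct and follows essentially the same route as the paper's: both lower-bound the supremum by the computational-basis measurement on the measured qubit, observe that the $|0\rangle$ outcome leaves an $(N-1)$-qubit gW state with GGM $\min_{i\neq r}|a_i|^2/(1-|a_r|^2)$ while the $|1\rangle$ outcome gives a product state, and average to obtain $\min_{i\neq r}|a_i|^2\geq\min_i|a_i|^2=\mathcal{G}$. The only cosmetic difference is that you derive $\mathcal{G}=\min_i|a_i|^2$ from the two-term Schmidt decomposition across each bipartition, whereas the paper reads it off from the maximal eigenvalues of the $n$-qubit reduced density matrices — equivalent computations.
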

\noindent Our numerical analysis suggests that irrespective of the value of $N$, the maximization involved in LGGM 
for the gW states, is obtained when the local 
projective measurement is performed in the computational basis, $\{|0\rangle,|1\rangle\}$.  
Using this information, an upper bound for $E^r_L$ of the state $|W_N\rangle_g$ can also be proved,
as given in Proposition \ref{pr:spgw} (see Appendix \ref{prop_proof} for the proof).
\begin{proposition}
For an arbitrary $N$-qubit gW state, 
$E_{L}^r$, for all values of $r$, is bounded above by the LGGM of the gW state having the same value of $\mathcal{G}$, 
but with squared modulus of all the coefficients except one, denoted by $|a_i|^2$, being equal to 
$|a_{j}|^2=(1-|a_i|^2)/(N-1)$, where $|a_i|^2=\min\{|a_k|^2\}$, $k=1,2,\cdots,N$, $j\neq i$, and 
$i,j\in\{1,2,\cdots,N\}$.
\label{pr:spgw}
\end{proposition}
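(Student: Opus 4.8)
The plan is to rely on the numerical observation recorded just before the statement — that for a gW state the supremum defining $E_L^r$ is attained by the rank-$1$ projective measurement in the computational basis $\{|0\rangle,|1\rangle\}$ — and then reduce the whole claim to an elementary inequality among the squared moduli $\{|a_i|^2\}_{i=1}^N$.

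First I would evaluate the computational-basis measurement on qubit $r$ of $|W_N\rangle_g$. Projecting qubit $r$ onto $|1\rangle$ occurs with probability $|a_r|^2$ and leaves the remaining $N-1$ qubits in the product state $|0\rangle^{\otimes(N-1)}$, which has $\mathcal{G}=0$. Projecting onto $|0\rangle$ occurs with probability $1-|a_r|^2$ and leaves a gW state of $N-1$ qubits whose (renormalized) squared coefficients are $|a_i|^2/(1-|a_r|^2)$, $i\neq r$. Using the closed form $\mathcal{G}(\text{gW})=\min(\text{squared coefficients})$, the averaged post-measurement GGM is
\begin{equation}
E_L^r \;=\; (1-|a_r|^2)\,\frac{\min_{i\neq r}|a_i|^2}{1-|a_r|^2}\;=\;\min_{i\neq r}|a_i|^2 .
\end{equation}

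Next I would bound $\min_{i\neq r}|a_i|^2$ in terms of $\mathcal{G}=\min_k|a_k|^2$. If $r$ is not a minimizing index, the left-hand side equals $\mathcal{G}$; and since $\sum_k|a_k|^2=1$ forces $\mathcal{G}\leq 1/N$, one has $\mathcal{G}\leq (1-\mathcal{G})/(N-1)$. If $r$ is a minimizing index, the $N-1$ nonnegative numbers $\{|a_i|^2:i\neq r\}$ sum to $1-\mathcal{G}$, so their minimum is at most their average $(1-\mathcal{G})/(N-1)$. In both cases $E_L^r\leq (1-\mathcal{G})/(N-1)$. Finally, applying the same computation to the gW state whose squared coefficients are $\mathcal{G}$ at one site and $(1-\mathcal{G})/(N-1)$ at each of the other $N-1$ sites, measurement on the distinguished (minimum-coefficient) site returns exactly $(1-\mathcal{G})/(N-1)$ while measurement on any other site returns $\mathcal{G}$; hence the LGGM of that state equals $(1-\mathcal{G})/(N-1)$, the bound of this paragraph is precisely the assertion of the proposition, and equality holds when $|W_N\rangle_g$ coincides — up to local unitaries, by Theorem \ref{th:unitary} — with this extremal state and $r$ is its minimum-coefficient site.

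The only non-routine ingredient is the first one, that the computational-basis measurement is optimal, which the paper adopts as a numerically-supported hypothesis rather than proving. A direct proof would require showing that $\sum_{l=1,2}p^l\,\mathcal{G}(|\psi^l_{N-1}\rangle)$ is maximized at $\theta_r=0$ over all $(\theta_r,\phi_r)$, and the obstacle is that for $\theta_r\neq 0$ the two post-measurement states are generic superpositions, no longer gW states, so their GGM has no simple closed form — it is itself an optimization over all bipartitions of the $N-1$ remaining qubits — which makes the outer maximization analytically unwieldy. Everything downstream of that hypothesis is the short chain of inequalities above.
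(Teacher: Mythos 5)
Your proof is correct and follows essentially the same route as the paper's: both rest on the numerically supported optimality of the computational-basis measurement, reduce to $E_L^r=\min_{i\neq r}|a_i|^2$, and bound this by $(1-\mathcal{G})/(N-1)$, which is identified as the LGGM of the flattened gW state with one coefficient $\mathcal{G}$ and the rest equal. Your explicit case analysis ($r$ a minimizing index or not) is merely a more careful rendering of the paper's remark that the argument repeats for every ordering and every measured position.
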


Fig. \ref{genw_sym}(a) depicts the scatter diagram of $E_L^1$ vs. $\mathcal{G}$ in the case of a set of $10^5$ Haar-uniformly 
generated three-qubit gW states. The solid line, representing $E_L^1=\mathcal{G}$, depicts the lower bound of $E_L^1$ 
(Proposition II), while the dashed line, representing $2E_L^1+\mathcal{G}=1$, correspond to the upper bound given in 
Proposition \ref{pr:spgw}. Note that all the points in the scatter diagram are enclosed by these two lines, and $\mathcal{G}=0$.
The W state, $|W\rangle$, is obtained from the gW state with $a_i=1/\sqrt{N}$, $i=1,2,\cdots,N$. Evidently, the GGM and 
LGGM of the W state are equal to each other, having a value $1/N$.

Considering that LGGM of the $N$-qubit gW state is maximized when measurement is performed in $\{|0\rangle,|1\rangle\}$, we make the 
following observation. 

\noindent\textbf{Observation.} For an $N$-qubit gW state, $E^r_L\geq\mathcal{G}$ if 
the value of $\mathcal{G}$ is obtained from $r$:rest bipartition, while in all other cases, 
$E^r_L=\mathcal{G}$.

\noindent We demonstrate the implications of the above observation in Table \ref{tab1}, where 
the $E^r_L$ of three-qubit gW states in different cases (as distinguished by the different orderings of the $|a_i|^2$), 
are tabulated along with their GGMs. 
Note that the global LGGM, $E_{GL}$, is obtained when local projective measurement is performed on 
the qubit $r$ such that the maximum Schmidt coefficient is obtained from the bipartition $r$:rest, 
as presented in the last column. From the table, it is clear that 
the value of $E_{L}^{r^\prime}$, when the $r^\prime$:rest bipartition does not provide the maximal Schmidt 
coefficient, is always equal to the GGM of the three-qubit gW state, which is in agreement with the observation. 
We shall be investigating this issue, and its possible generalization, while discussing the LGGM of arbitrary 
three-qubit pure states, in Sec. \ref{num_res}.

\noindent\textbf{\textit{Symmetric States.}} We next consider a special class of multipartite quantum states known as the ``symmetric
states'', which remains unaltered with the permutation of parties. First, we investigate the behavior of LGGM in a special subset 
of symmetric states -- the highly entangled $N$-qubit Dicke states \cite{dicke,dicke_entangle} --
with high applicational advantages \cite{dicke_utility}. An $N$-qubit Dicke state with $k$ excitations can be represented as 
\begin{eqnarray}
 |D^{N}_{k}\rangle
 =\frac{1}{\sqrt{\binom{N}{k}}}\sum_{i}\mathcal{P}_i\left(|0\rangle^{\otimes N-k}\otimes|1\rangle^{\otimes k}\right),
 \label{dickestate}
\end{eqnarray}
where the summation is over all possible permutations of $N$-qubit product states composed of $N-k$ qubits in the 
ground state, $|0\rangle$, and the 
rest $k$ qubits in the excited state, $|1\rangle$. 
Note that the $N$-qubit W state can be identified as $|D^N_1\rangle$. 
The GGM of $|D_k^N\rangle$, with $N>2$,
can be obtained as \cite{dicke_entangle}
\begin{eqnarray}
  \mathcal{G}=
    \begin{cases}
     \frac{N-2}{2(N-1)} \mbox{ for } k=\frac{N}{2},\\
     \frac{k}{N} \mbox{ for } k<\frac{N}{2}.
   \end{cases}
   \label{dicke_ggm}
\end{eqnarray}
To determine $E_{L}^r$, $r=1,\cdots,N$, of the Dicke state, one has to perform a 
rank-$1$ projective measurement on any one of the qubits. The details on the post-measurement ensemble and 
determination of GGM can be found in Appendix \ref{ap:dicke}.
We perform extensive numerical analysis for different values of $N$ and $k$ to 
find that irrespective of the number of qubits 
and the number of excitations, the optimization of LGGM for $|D^N_k\rangle$ always takes place 
at $\theta=0,\phi=0$, i.e., at the $\{|0\rangle,|1\rangle\}$ basis. The LGGM, in that case, can be obtained as explicit 
functions of $N$ and $k$, as 
\begin{eqnarray}
 E_L^r(\mbox{even }N)&=&
    \begin{cases}
     \frac{N-2}{2(N-1)} \mbox{ for } k=\frac{N}{2},\\
     \frac{k}{N} \mbox{ for } k<\frac{N}{2}. 
    \end{cases}\nonumber \\
 E_L^r(\mbox{odd }N)&=&
    \begin{cases}
     \frac{k}{N} \mbox{ for } k<\frac{N-1}{2},\\
     \frac{N-1}{2N}-\frac{N+1}{4N(N-2)} \mbox{ for } k=\frac{N-1}{2},N>3,\\
     \frac{1}{3} \mbox{ for } k=1,2 \mbox{ and } N=3.
   \end{cases} \nonumber \\
   \label{dicke_lggm}
\end{eqnarray}
\noindent From Eqs. (\ref{dicke_ggm}) and 
(\ref{dicke_lggm}), it is evident that for $N>3$, $E^r_L<\mathcal{G}$ for odd $N$ with $k=(N\pm1)/2$,
while in all other cases, $E_L^r=\mathcal{G}$, thereby suggesting an upper 
bound of LGGM, given by $E_{L}^r\leq\mathcal{G}$ for 
symmetric Dicke states of $N$ qubits with $k$ excitations provided the maximization is obtained in the 
computational basis.

\begin{figure}
 \includegraphics[scale=0.325]{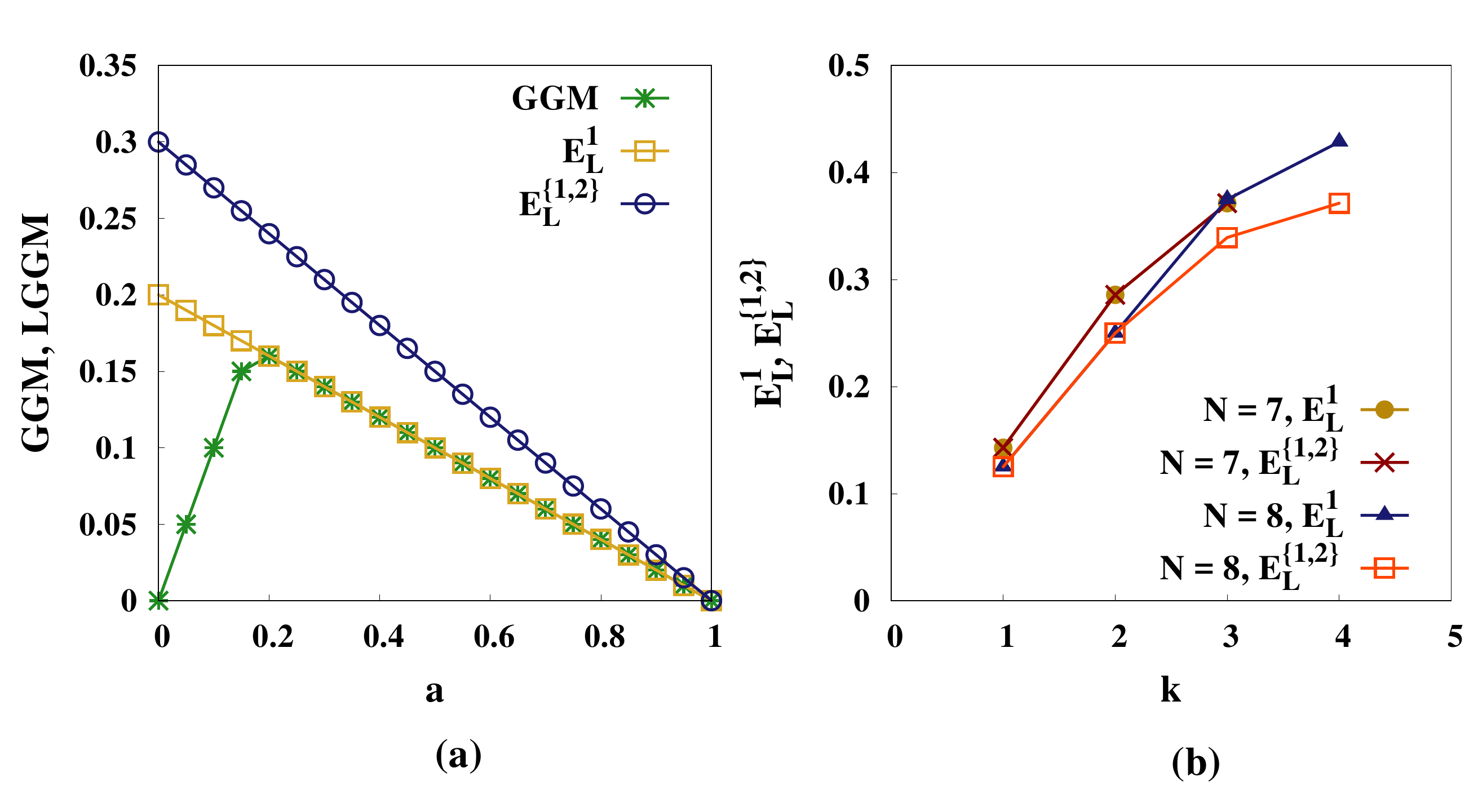}
 \caption{(Color online.) \textbf{(a)} Plot of $\mathcal{G}$, $E_L^1$, and $E_L^{\{1,2\}}$ 
 as functions of $a$ in case of the four-qubit gW state 
 $|W^4\rangle_g$ with  $a_1=a$, $a_2=\sqrt{(1-a^2)/5}$, $a_3=\sqrt{3(1-a^2)/10}$, and
 $a_4=\sqrt{(1-a^2)/2}$. 
 \textbf{(b)} Plot of $E_L^1$ and $E_L^{\{1,2\}}$ with $k$ for $N=7$ and $N=8$ in the case of Dicke states.
 All quantities plotted are dimensionless.}
 \label{mm}
\end{figure}

%

A general form of $N$-qubit symmetric states can be constructed 
by making superposition of all the Dicke states. 
Such a state can be written as 
\begin{eqnarray}
 |D_g^N\rangle=\sum_{k=0}^{N}a_{k}|D^N_k\rangle,
 \label{gen_sup_dicke}
\end{eqnarray}
with complex $\{a_k\}$ such that $\sum_{k=0}^N|a_{k}|^2=1$. Here, the summation index, $k$, 
denotes the possible number of excitations in a Dicke state, 
and can have values $0\leq k\leq N$. 
The difficulty in computing the GGM of an $N$-qubit 
symmetric state of the form given in Eq. (\ref{gen_sup_dicke}) restricts one 
to calculate the LGGM of 
such states only numerically. 
Our numerical 
analysis suggests that for $N=3$, $E_L^1=\mathcal{G}$, while
the upper bound, $E^r_L\leq\mathcal{G}$ holds in the case $N=4,5$, 
like in the case of 
Dicke states.
Fig. \ref{genw_sym}(b) depicts the variation of $E^r_L$ against 
$\mathcal{G}$ for $10^5$ randomly chosen symmetric states  
for each of the cases $N=4$, and $5$, 
where the upper bound is satisfied. 
For $N=4$, and $5$, the percentage of states for which $E^r_L<\mathcal{G}$ are $33.4\%$, and $46.8\%$, respectively,
while for the rest of the states, $E_L^r=\mathcal{G}$, up to four decimal places.

\begin{table*}
 \begin{tabular}{|c|c|c|c|c|c|c|c|c|c|c|c|}
 \hline
 \cellcolor{blue!10} \textbf{State} 
 &\cellcolor{cyan!10} $\mathbf{\mathcal{G}}$ 
 &\cellcolor{red!10} $\mathbf{E_{L}^1}$ 
 &\cellcolor{red!10} $\mathbf{E_{L}^2}$ 
 &\cellcolor{red!10} $\mathbf{E_{L}^3}$ 
 &\cellcolor{red!10} $\mathbf{E_{L}^4}$   
 &\cellcolor{yellow!10} $\mathbf{E_{L}^{\{1,2\}}}$ 
 &\cellcolor{yellow!10} $\mathbf{E_{L}^{\{1,3\}}}$
 &\cellcolor{yellow!10} $\mathbf{E_{L}^{\{1,4\}}}$
 &\cellcolor{yellow!10} $\mathbf{E_{L}^{\{2,3\}}}$
 &\cellcolor{yellow!10} $\mathbf{E_{L}^{\{2,4\}}}$
 &\cellcolor{yellow!10} $\mathbf{E_{L}^{\{3,4\}}}$
 \\
 \hline 
  $|\Psi^4_7\rangle$ & $1/4$ & $1/4$ & $1/4$ & $1/4$ & $1/4$ & $1/4$ & $1/4$ & $1/4$ & $1/4$ & $1/4$ & $1/4$\\
 \hline 
  $|\Psi^4_8\rangle$ & $1/4$ & $1/2$ & $1/4$ & $1/4$ & $1/4$ & $1/2$ & $1/2$ & $1/2$ & $1/2$ & $1/2$ & $1/4$\\
 \hline
 $|\Psi^4_9\rangle$ & $0$ & $1/2$ & $0$ & $0$ & $0$ & $1/2$ & $1/2$ & $1/2$ & $1/2$ & $1/2$ & $0$\\ 
 \hline
 \end{tabular}
\caption{The values of $\mathcal{G}$, $E_L^{r}$ ($r=1,2,3,4$), and 
$E_L^{\{r_1,r_2\}}$ ($r_1\neq r_2$, $r_{1,2}\in\{1,2,3,4\}$) 
for the four-qubit states $|\Psi^4_7\rangle$, $|\Psi^4_8\rangle$, and $|\Psi^4_9\rangle$. }
\label{tab2}
\end{table*}

\subsection{Can local measurement over more than one qubit be beneficial?}
\label{mqm}
In this section, we focus on the question as to whether increasing the number of measured parties can help to increase localizable 
entanglement, and discuss examples of multipartite entangled states in this context. To begin with, we point out that in the case of 
$N$-qubit gGHZ state, $E_L^r=E_{L}^{\mathbf{r}_\alpha}$, where $\mathbf{r}_{\alpha}=\{r_j\}$, $j=1,2,\cdots,m$, 
and $1< m\leq N-2$. This implies that LGGM does not depend on the number of measured qubits for the gGHZ state. 
The situation, however, can be  drastically different if one considers the gW state of $N$ qubits. 
As an example, we consider the case of a four-qubit gW state, where the coefficients $\{a_i\}$, $i=1,\cdots,4$, are real, such that 
$a_1=a$, $a_2=\sqrt{(1-a^2)/5}$, $a_3=\sqrt{3(1-a^2)/10}$, and 
$a_4=\sqrt{(1-a^2)/2}$. For this state, $\mathcal{G}<E_L^1<E_L^{\{1,2\}}$
with $a\leq0.17$, while $\mathcal{G}=E_L^1<E_L^{\{1,2\}}$ when $a>0.17$ (Fig. \ref{mm}(a)), thereby indicating an advantage of measuring 
more than one qubit.  
Such instances can also be found in four-qubit states other than gW states, and in quantum states involving higher number of parties 
(for such examples, see Appendix \ref{ap:example}).
On the contrary, there exist quantum states, for example, $N$-qubit Dicke states, for which $E_{L}^{\{1,2\}}$ is not beneficial 
compared to $E_L^1$. We determine the values of $E_{L}^1$ and $E_{L}^{\{1,2\}}$ in the case of 
$|D^N_k\rangle$ with $4\leq N\leq10$, for different allowed values of $k$, and find that irrespective of the 
value of $N$ and $k$, $E_{L}^{\{1,2\}}\leq E_L^1\leq\mathcal{G}$. 
The variations of $\mathcal{G}$, $E_{L}^1$, and $E_{L}^{\{1,2\}}$
against different values of $k$ for $N=7$ and $8$ are shown in Fig. \ref{mm}(b), while the variations of $E_L^1$ and $E_L^{\{1,2\}}$ 
with varying $k$, for $N=9$ and $10$, are qualitatively similar to that for $N=7$ and $8$ respectively. 
The question of whether this is a generic property of the symmetric states is discussed in the next section. 
One must note that there exists multiparty states other than Dicke states, for which local measurement over more than one qubit may not 
be advantageous, as discussed in the next section. 


\subsection{Arbitrary N-qubit pure states: Numerical Results}
\label{num_res}

We now consider arbitrary $N$-qubit pure states, and compare advantages of multi-qubit measurements over single-qubit ones via 
numerical analysis. We focus on $N$-qubit pure states with $N=3,4$ and $5$, and consider different classes of such states.
Unless otherwise stated, we Haar-uniformly generate $10^5$ arbitrary pure states in each case, and compute
$\mathcal{G}$ for the original state, $E_{L}^1$, and $E_L^{\{1,2\}}$ in the case of each $N$-qubit state. In our numerical analysis, 
the values of two quantities are considered to be equal when they are same upto four decimal places. 

In the three-qubit scenario, we separately consider arbitrary three-qubit pure states belonging to the paradigmatic GHZ and the W classes 
\cite{dvc} (see Appendix \ref{ap:ghzw} for a short description of the classes, and the post-measurement ensembles), 
which are mutually disjoint sets that together construct the entire set of three-qubit pure states. 
We find that for an arbitrary three-qubit pure state, $E_L^1\geq\mathcal{G}$, which is evident from Fig. \ref{fourq}(a).
We perform an extensive numerical search over a set of $10^7$ three-qubit pure states from each of the 
W and the GHZ classes, and find that for all instances, $E_{GL}=E_L^{r^\prime}$, when the bipartition $r^\prime$:rest 
provides $\mathcal{G}$ -- an observation similar to what is found in the case of single-qubit measurement on an $N$-qubit gW state. 
When $r\neq r^\prime$, $E_{L}^r=\mathcal{G}$. 
These findings lead us to the following conjecture:

\begin{itemize}
 \item \textit{If the maximal Schmidt coefficient for an arbitrary three-qubit pure state is obtained across 
the bipartition $r$:rest, $r\in\{1,2,3\}$, then $E_{GL}=E_{L}^{r^\prime}\geq\mathcal{G}$ with $r^\prime\in\{1,2,3\}$ when $r^\prime=r$. 
On the other hand, $E_L^{r^\prime}$ coincides with $\mathcal{G}$ when $r^\prime\neq r$.}
\end{itemize}
\noindent This conjecture helps to pin-point the position of measurement while one tries to increase the value of genuine multiparty 
entanglement by means of localization. 

%

To investigate arbitrary four-qubit pure states, we focus on the classes given by $|\Psi^4_i\rangle$, $i=1,2,\cdots,6$, from 
the nine parametric classes of four-qubit states \cite{slocc,infinite} (see Appendix \ref{ap:fourq}). 
Note that in all instances except $|\Psi^4_6\rangle$, the LGGM is found to be upper bounded by the GGM of the original state
(see Fig. \ref{fourq}(b)), when measurement is performed on qubit $1$. 
This result remains unchanged with a change in the position of the measured qubit, as indicated by 
Fig. \ref{fourq}(c), where results corresponding to measurement over qubit $2$ in the cases 
of $|\Psi^4_3\rangle$, $|\Psi^4_5\rangle$, and $|\Psi^4_6\rangle$ are presented. The LGGM of the states $|\Psi^4_i\rangle$, $i=7,8,9$, 
with single-qubit measurement, can be obtained analytically, and are tabulated in Table \ref{tab2}. 


\begin{figure}
 \includegraphics[scale=0.3]{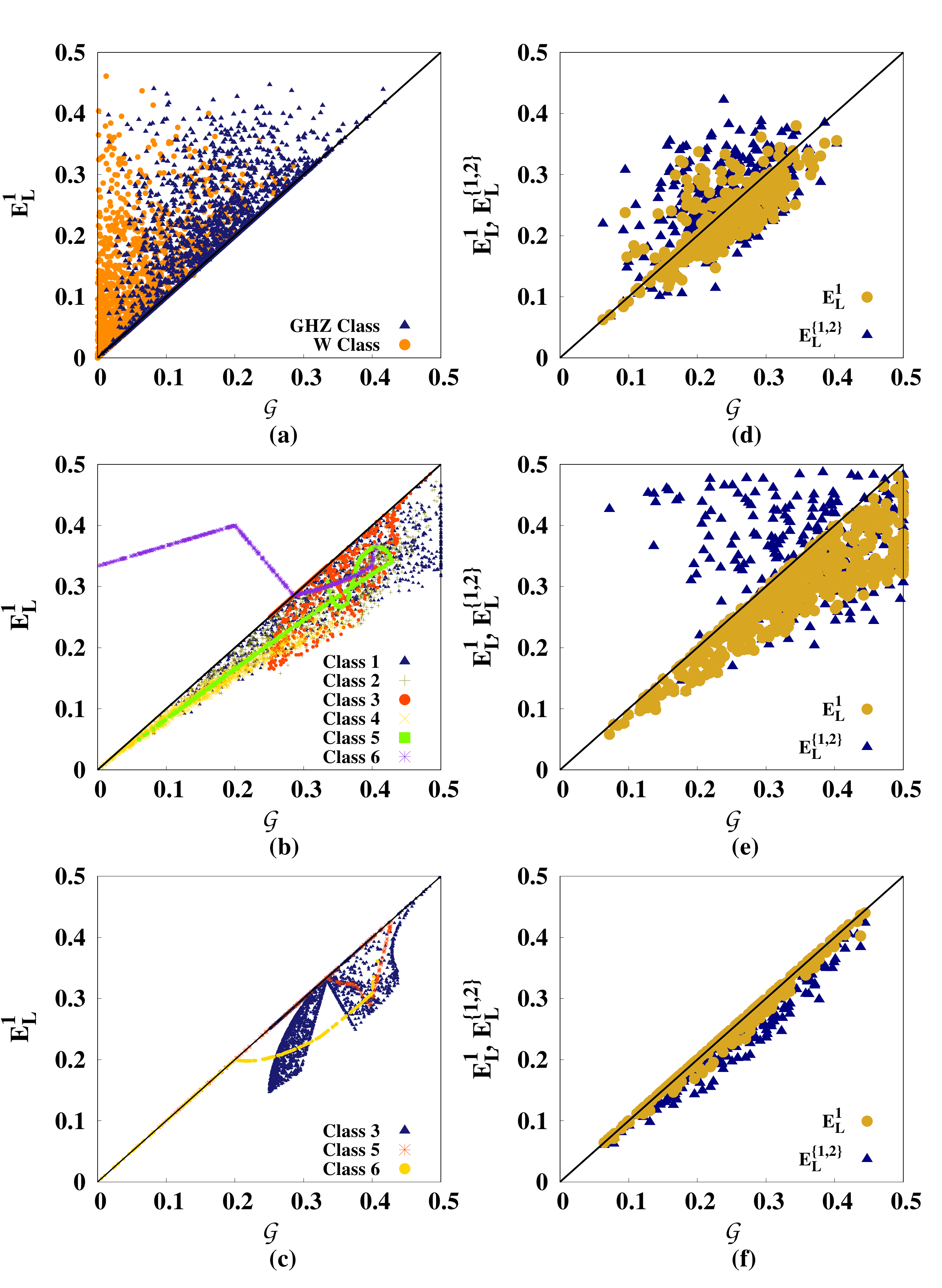}
 \caption{(Color online.) Scatter plots. 
 \textbf{(a)} $E_{L}^1$ vs. $\mathcal{G}$ for  three-qubit pure states belonging to the GHZ class and the W class. 
 \textbf{(b)} $E_L^1$ vs. $\mathcal{G}$ for the four-qubit classes represented by 
 $|\Psi^4_i\rangle$, $i=1,\cdots,6$. 
 \textbf{(c)} $E_L^2$ vs. $\mathcal{G}$ for the four-qubit classes represented by 
 $|\Psi^4_i\rangle$, $i=3,5,6$.
\textbf{(d)} $E_L^1$ and $E_L^{\{1,2\}}$ vs. $\mathcal{G}$ in the case of arbitrary four-qubit pure states. 
\textbf{(e)} $E_L^1$ and $E_L^{\{1,2\}}$ vs. $\mathcal{G}$ for four-qubit states of the form $|\Psi^4_1\rangle$.
and \textbf{(f)} $E_L^1$ and $E_L^{\{1,2\}}$ vs. $\mathcal{G}$ for four-qubit symmetric states of the form $|D^N_g\rangle$.
Each plot constitutes of $10^5$ Haar-uniformly generated pure states.
 All quantities plotted are dimensionless.}
 \label{fourq}
\end{figure}

\begin{figure}
 \includegraphics[scale=0.3]{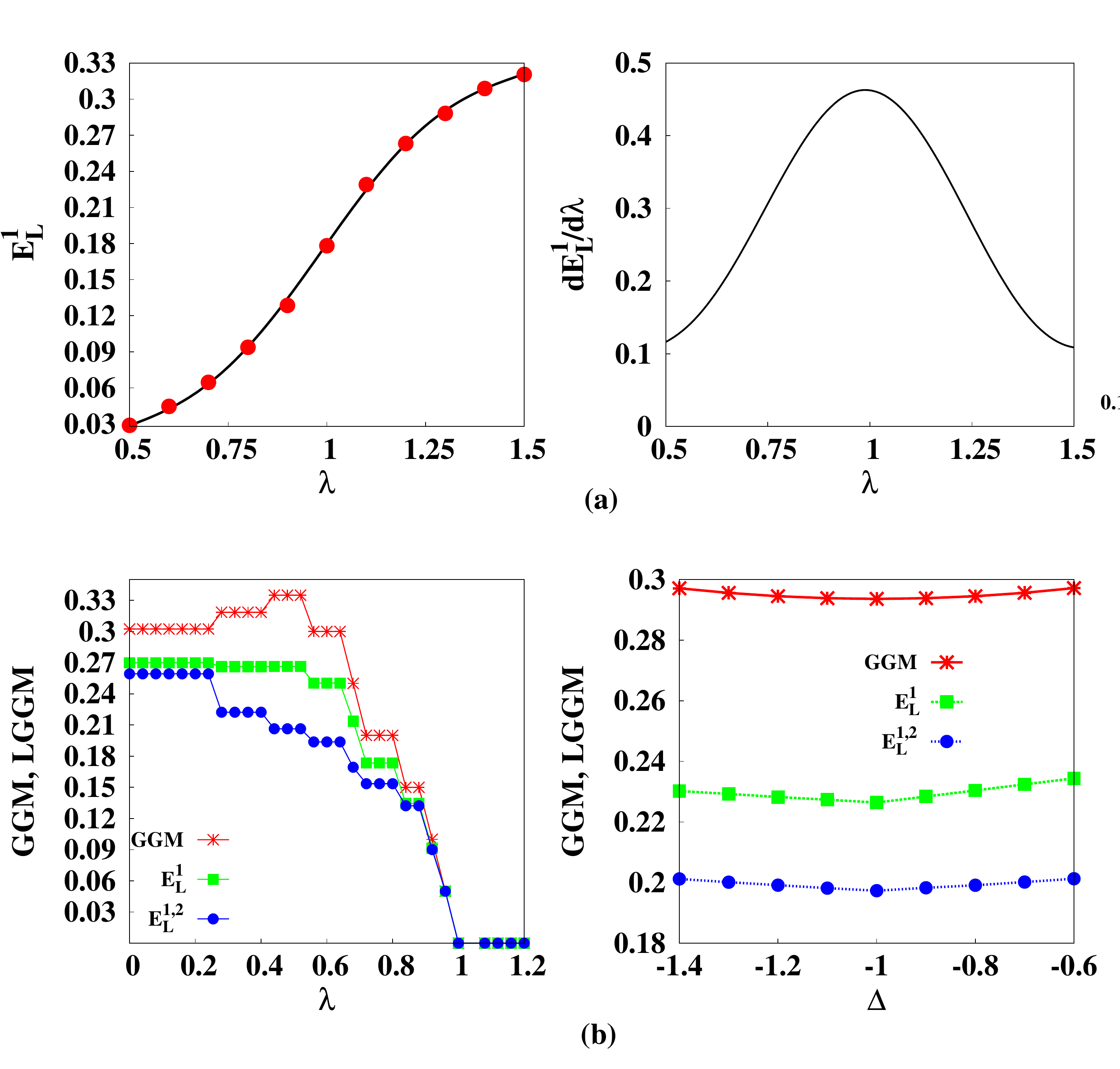}
 \caption{(Color online.) \textbf{(a)} Variations of $E_L^1$ (left panel), and $dE_L^{1}/d\lambda$ (right panel) as 
 functions of $\lambda=J/h$ for the transverse-field Ising model with $N=16$. \textbf{(b)} Variation of $\mathcal{G}$, $E_L^1$, and $E_L^{\{1,2\}}$ as functions of $\lambda$ with $\Delta=0.5$ (left panel) and $\Delta$ with $\lambda=0$ (right panel) for $N=20$. 
All quantities plotted are dimensionless.}
 \label{ising-fig-1}
\end{figure}



In the case of Haar-uniformly generated arbitrary four-qubit pure states, 
about $29\%$ of states are found to have $E_L^1>\mathcal{G}$, as depicted 
in Fig. \ref{fourq}(d). Interestingly, when local measurements over 
qubits $1$ and $2$ are performed, about $22\%$ of the $71\%$ states for which $\mathcal{G}>E^1_L$ are found to have 
$E^{\{1,2\}}_L>\mathcal{G}$. Also, for about $47.6\%$ of such states, $E^{\{1,2\}}_L>E_{L}^1$. 
Qualitatively similar results are found when random four-qubit states are sampled 
Haar-uniformly from the parametrized four-qubit classes \cite{slocc,infinite}. For example, 
the LGGM of about $20.4\%$ of the four-qubit states of the 
form $|\Psi^4_1\rangle$, for which $E_L^{1}\leq\mathcal{G}$, can be increased beyond 
the value of $\mathcal{G}$, when local measurements over 
qubits $1$ and $2$ are performed (Fig. \ref{fourq}(e)). 
Moreover, for about $44.9\%$ of the states, $E^{\{1,2\}}_L>E_{L}^1$. Hence, the results again indicate that local measurements on two parties can be more advantageous than the measurement on a single party for some multiparty states.  
The values of $E^{\{r_1,r_2\}}_L$, where $r_1\neq r_2$ with $r_{1,2}\in\{1,2,3,4\}$, 
in the case of the states $|\Psi^4_i\rangle$, 
$i=7,8,9$, are tabulated in Table \ref{tab2}. Note that the state $|\Psi^4_7\rangle$, 
provides an example of a four-qubit pure state for which $E_L^{\mathbf{r}_\alpha}=\mathcal{G}$ $\forall$ $\mathbf{r}_\alpha$, 
where $\alpha=1,2,\cdots,\binom{N}{m}$, $m=1,2$. %
Note also that $|\Psi^4_8\rangle$ presents an example of a four-qubit state for which a $E_L^r>\mathcal{G}$
only when measurement is performed over a specific qubit, and measurement over two qubits can enhance the value of 
LGGM over the single-qubit measurement.

It is evident from Figs. \ref{fourq}(b)-(e) that the arbitrary four-qubit pure states as well as four-qubit pure states of 
the form $|\Psi^4_i\rangle$, $i=1,\ldots,6$, 
are not uniformly distributed over the $\mathcal{G}-E_L^r$ and $\mathcal{G}-E_L^{\{r_1,r_2\}}$ planes, $r,r_1,r_2=1,\ldots,4$, 
but cluster around specific regions. 
In the case of $E_L^1$, one of the boundaries of the accessible region is always the line $E_L^1=\mathcal{G}$ when 
$|\Psi^4_i\rangle$, $i=1,\ldots,4$ are considered. For $|\Psi^4_5\rangle$ and $|\Psi^4_6\rangle$, the situation is different, and
all the states lie on very restricted regions on the $\mathcal{G}-E_L^r$ plane.
Such clustering is found in the case of five-qubit arbitrary pure states also, which highlights the importance of 
finding the quantum states that lies in a beneficial region, i.e., where $E_L^r \geq \mathcal{G}$, or $E_L^{\{r_1,r_2\}}\geq \mathcal{G}$. 
However, Haar-uniform numerical simulation of arbitrary pure states becomes difficult as the number of qubits increases, 
thereby restricting the investigation of 
the behavior of LGGM against the GGM of the original state in the case of $N>5$.


We conclude by mentioning that similar to the case of single-qubit measurement, in the case of $|D^N_g\rangle$ (Eq. (\ref{gen_sup_dicke})), 
$E^{\{r_1,r_2\}}_L\leq\mathcal{G}$, where $r_1\neq r_2$, and $r_{1,2}\in\{1,2,\cdots,N\}$. 
Note here that the symmetry present in the states $|D^N_g\rangle$ guarantees that all pairs of qubits chosen for local 
measurement are equivalent. Also, in contrast to the previous examples of two-qubit measurements, in this case,
an increase in $m$ is found to result, for an overwhelmingly large fraction of states, in a decrease in the value of LGGM. 
For example, in about $99.1\%$ of $|D^4_g\rangle$, $E^{\{r_1,r_2\}}_L<E^{1}_L$, while for $N=5$, the fraction is $93.9\%$. 
The variation of LGGM against GGM for $|D^4_g\rangle$ is depicted in Fig. \ref{fourq}(f). 

\section{LGGM in Quantum Spin Models}
\label{spin_lggm}

For the past fifteen years or so, probing interesting physical phenomena observed in many-body systems using 
quantum information theoretic tools and techniques has been an active cross-disciplinary 
field of research \cite{geom_ent_manybody,ggmn,fazio_rmp,modi_rmp,rev_xy,ent_xy_bp,ent_xy_mp,trap_rev}. 
Realization of quantum spin Hamiltonians in various substrates, such as solid state systems \cite{solid_xy}, 
optical lattice \cite{coldatom,xxz-exp,optical-lattice}, ion traps \cite{trap_rev,lab_xy_ion}, and 
NMR \cite{nmr} under controlled
laboratory environments have allowed researchers to test the properties of several information theoretic 
measures of quantum correlations in well-known quantum spin models. 
In this section, we discuss the behavior of LGGM in 
the vicinity of quantum phase transitions (QPT) \cite{qpt_book}, when the pure state, for which LGGM is calculated, is the ground state of 
well-known 1d quantum spin models. 

It is interesting to mention here that a similar set of results was obtained in Ref. \cite{loc_ent} by using the concept of localizable 
entanglement. An important difference of the results there with those in this paper is that the plots and the corresponding analyses here 
are directly for the LGGM, while those in Ref. \cite{loc_ent} are often for bounds on localizable entanglement. In this context, we would like 
to set emphasis on the fact that the computation of LGGM in a multiqubit state with high number of qubits can also be involved. On one hand, one has to perform an optimization over $2m$ real parameters (see discussion in Sec. \ref{sqm}) with $m$ being the number of measured qubits. 
On the other hand, in order to calculate LGGM corresponding to local measurement over $m$ qubits in an $N$-qubit pure state, 
GGM of $(N-m)$-partite pure states in the post-measurement ensemble needs to be computed. Typically, for an $\tilde{N}$-qubit pure state, 
computation of GGM in its full generality requires consideration of the maximum eigenvalues of 
a total of $\sum_{i=1}^{\tilde{N}/2}\binom{\tilde{N}}{i}$ density matrices of dimensions $2^i$, $i=1,\ldots,\tilde{N}/2$, indicating an exponential increase of 
the computational complexity of LGGM with increasing $\tilde{N}$. Also, exact computation of LGGM for the ground state of a spin Hamiltonian 
constituted of $N$ spins requires access to the exact ground state of the Hamiltonian, the determination of which is non-trivial 
for high values of $N$. Therefore, we use Lanczos diagonalization technique \cite{titpack} to determine the ground states of the spin Hamiltonians when 
$N$ is large, and 
use the symmetries of the ground state to reduce the computational complexity. 

\subsection{Transverse-field Ising Model}

The first model we discuss is the transverse-field quantum Ising model \cite{xy_group} in 1d, with periodic 
boundary condition (PBC), whose Hamiltonian is given by 
\begin{eqnarray}
H&=& J\sum_{i=1}^{N}\sigma_{i}^{x}\sigma_{i+1}^{x}
+h\sum_{i=1}^{N}\sigma_{i}^{z}.
\label{ising}
\end{eqnarray}
Here, $J$ and $h$ respectively  are strengths of the nearest-neighbor exchange coupling and the external magnetic field,
$\sigma_i^{x,z}$ are the Pauli spin matrices, and $N$ is the number of qubits in the system. Under PBC, 
$\sigma_{N+1}\equiv\sigma_1$. Both $J$ and $h$ are chosen to be positive. 
The model is known to undergo a QPT 
at the critical value of the parameter $\lambda=\lambda_c\equiv 1$ 
\cite{fazio_rmp,modi_rmp,rev_xy,xy_group,qpt_book,ent_xy_bp}, where $\lambda=J/h$,  from an anti-ferromagnetic ($J>h$) 
to a paramagnetic phase ($J<h$). A few studies of the behavior of multipartite 
entanglement measures across QPT in quantum spin models are available \cite{ent_xy_mp,geom_ent_manybody,ggmn} due to the difficulty of computing such measures. However, the investigation of bipartite measures has extensively been carried out \cite{trap_rev,fazio_rmp}. 

To overcome the difficulty in computing LGGM for higher number of parties, 
we first look into systems comprised of a relatively smaller number of parties (say as $N=8,10,12$), and compute the LGGM of the ground state without 
any approximation. We find that irrespective of the value of $N$, in the case of single-qubit measurement, the maximization 
involved in Eq. (\ref{lmen}) can be achieved by using the computational basis $\{|0\rangle,|1\rangle\}$ (see also \cite{regular_point}).
Moreover, in the case of single-qubit measurement, the GGM of the $(N-1)$-qubit pure states in the post-measurement ensemble is always 
obtained from either the $1$:rest, or the $2$:rest bipartitions. We use these information to  
compute the LGGM for the ground state of the system with high values of $N$. Note that the 1d transverse-field Ising model can be solved by 
successive application of the Jordan-Wigner and Bogoliubov transformations \cite{xy_group}. However, this does not provide access to an 
analytical form of the ground state, from which LGGM can be computed. We determine the ground state of the model with high values of $N$ 
via the Lanczos diagonalization scheme, and investigate the behavior of LGGM close to the QPT point. 
%
%
%
Fig. \ref{ising-fig-1}(a) (left panel) depicts the variation of $E_L^1$ against $\lambda$ in the ground state of the model with
$N=16$. The points
in the graph represents the numerical values of $E^1_L$ obtained by using the approximations discussed above, while the continuous
line represents the fitted curve. The first derivative of $E_L^1$ with $\lambda$, as obtained from the fitted curve, shows a maximum at the 
QPT point $\lambda_c=1$ (Fig. \ref{ising-fig-1}(a), right panel). The maximum sharpens with increasing $N$.

\subsection{XXZ model in an external field}

The Hamiltonian describing the 1d XXZ model \cite{bethe,xxz_group,mikeska,takahashi}, under PBC, is given by 
\begin{eqnarray}
H^\prime&=& J^\prime\sum_{i=1}^{N}(\sigma_{i}^{x}\sigma_{i+1}^{x}+\sigma_i^y\sigma_{i+1}^y-\Delta\sigma_i^z\sigma_{i+1}^z)
+h^\prime\sum_{i=1}^{N}\sigma_{i}^{z},\nonumber \\
\label{xxz}
\end{eqnarray}
where $J^\prime$ and $h^\prime$  represent respectively the nearest-neighbor exchange coupling and the external field-strength, 
while $\Delta$ is the (dimensionless) anisotropy 
in the $z$ direction. Although the model can be solved using the thermodynamic Bethe ansatz technique \cite{bethe,takahashi}, 
an analytical form of the LGGM or any multipartite entanglement measure is still elusive due to the inaccessibility of the analytical form of the  exact ground state.
Similar to the case of the transverse Ising model, we probe the phase diagram \cite{takahashi} of this model using LGGM 
as the physical quantity, where the ground state is obtained 
via Lanczos diagonalization method for different values of the system parameters, $\Delta$ and $h^\prime/J^\prime$. 
The observations in the transverse-field Ising model, such as obtaining the GGM of the post-measurement pure states from either the $1$:rest, or the $2$:rest bipartitions, in the case of smaller number of spins, are found to be valid here even for local measurement over two qubits. An assumption of the validity of these observations in the case of large 
$N$, considerably reduces the complexity of computation.  
Fig. \ref{ising-fig-1}(b) (left panel) shows the variations of $\mathcal{G}$, $E_{L}^{1}$, and 
$E_{L}^{\{1,2\}}$ against $\lambda\equiv h^\prime/J^\prime$ for the ground state of the 
Hamiltonian given in Eq. (\ref{xxz}), with $N=20$ and $\Delta=0.5$. 
With increasing $\lambda$, all the quantum correlation measures become zero at $\lambda=\lambda_c\equiv 1$, signaling a transition of the ground state from an entangled to a product state \cite{xxz_group,takahashi}. The $z$-component of the total spin is a conserved quantity of the system due to the presence of ${\mathbb{Z}}_2$ symmetry in the Hamiltonian, thereby making the Hamiltonian block-diagonalizable. The plateaus in the variations of GGM and LGGM with $\lambda$ correspond to different values of the  $z$-component of the total spin.
With increasing $N$, the number of the plateaus increases, while the widths of the individual plateaus decrease, and the curves eventually tend to 
continuous ones for high values of $N$.

In the absence of the external magnetic field, the ground state of the model experiences a 
KT transition \cite{xxz_group} at $\Delta=-1$, which is signaled by a blunt minimum in the $\mathcal{G}$ vs. $\Delta$ curve 
(see Fig. \ref{ising-fig-1}(b) (right panel)). On the other hand, the LGGMs, namely $E_{L}^1$ and $E_{L}^{\{1,2\}}$, show a sharper cusp at $\Delta=-1$, 
correctly signaling the KT transition. 
Therefore, similar to 
localizable entanglement \cite{loc_ent}, LGGM also signals 
the KT transition, which is usually not detected by the frequently-used 
quantum information theoretic quantities in detecting 
QPTs \cite{oth_mes_xxz}.

Note here that even for a small system where finite-size effects are expected to play a considerable role, 
LGGM serves as a satisfactory indicator of quantum critical phenomena. 
Hence, LGGM is expected to find its applicability in the investigation of 
quantum cooperative phenomena observed in many-body systems beyond the quantum spin models.

\section{Concluding remarks}
\label{conclude}

Multipartite entanglement has been proven to be useful for the successful implementation of several quantum 
information theoretic protocols. In the present paper, we consider the conceptualization and use of localizable multipartite 
entanglement, obtained by performing a local measurement at a few parties. The notion of localizable multiparty entanglement depends on 
the understanding and computability of another measure of multiparty entanglement of a lower number of parties, which we refer to as the 
seed measure. We use the geometric measure in general, and the generalized geometric measure in particular, as the see measure. In the case when 
the seed measure is the generalized geometric measure, the localizable multiparty entanglement measure is called the localizable generalized 
geometric measure. We discuss its various properties, and in particular, we analytically consider the 
behavior of the measure for a number of paradigmatic examples of 
multipartite pure states, where the localization is achieved via local rank-$1$ 
projective measurement over a single qubit. 
The examples include the $N$-qubit generalized GHZ and W states, Dicke states, and the generalized 
superposition of Dicke states
for fixed number of qubits. We show that for the $N$-qubit generalized GHZ and W states, 
the localization of generalized geometric 
measure by local projection measurement over one qubit always results 
in a value of localizable multipartite entanglement which is 
greater than, or equal to the multipartite entanglement present in the original 
state. Our numerical simulations seem to indicate that such enhancement due to measurement holds
for arbitrary three-qubit pure states. However, for higher number of parties, 
no such lower bound exists. We also show that for the 
$N$-qubit Dicke states, the localizable multipartite entanglement achieved via 
single-qubit measurement is bounded above by the 
generalized geometric measure of the original state. 

To investigate whether measurement over more than one qubit helps in achieving a 
better localization of multipartite entanglement
than a single-qubit measurement, we consider several examples of multi-qubit pure states. 
We show that there exists multiqubit states
in which local measurement over two qubits yields higher values of 
localizable generalized geometric measure compared to 
single-qubit measurement. 
However, this phenomenon is not generic for arbitrary 
multiqubit states, as shown from our numerical simulations.
We finally inquire whether there exists a situation in which LGGM is more powerful than its parent 
multipartite entanglement measure, the GGM. We show that this is indeed the case for detecting 
QPT in the many-body systems. Specifically, we show that the derivatives of LGGM can signal QPT of the transverse-field 
Ising model more accurately even for a smaller system-size, achievable in current experiments, compared to that of the GGM.
We also show that LGGM detects the KT transition of the XXZ model better than its parent multiparty measure.

\appendix
 
\section{Multipartite entanglement measures}
\label{mult-mes}
 
The geometric measure of entanglement of an $N$-partite state $|\tilde{\Psi}_N\rangle$, 
is defined as   
\begin{eqnarray}
 G_{K}(|\tilde{\Psi}_N\rangle)=1-\underset{\mathcal{S}_K}{\max}|\langle\tilde{\Phi}_N^{K}|\tilde{\Psi}_N\rangle|^2,
 \label{gm_ksep}
\end{eqnarray}
where $K$ ($2\leq K\leq N$) is an integer denoting the number of product state partitions into which the $N$-partite state 
$|\tilde{\Phi}_N^{K}\rangle$ can be 
divided \cite{horodecki_rmp,geom_ent,geom_ent_manybody,geom_ent_ksep,geom_ent_book,ggm_first,ggmn}. 
The distance of the state $|\tilde{\Psi}_N\rangle$ is minimized over the set, 
$\mathcal{S}_K$, of all $K$-separable pure states, $|\tilde{\Phi}_N^{K}\rangle$, and we refer to this measure as K-GM.
For example, $K=N$ corresponds to a fully separable state 
$|\tilde{\Phi}_N^{N}\rangle\equiv\bigotimes_{i=1}^{N}|\tilde{\phi}_{i}\rangle$, leading to the original definition of 
GM \cite{geom_ent,geom_ent_book}.
On the other hand, GGM is obtained for $K=2$ \cite{ggm_first,ggmn}, 
the other extremum of K-GM, which we denote by $\mathcal{G}(|\tilde{\Psi}_N\rangle)$. 
The optimization in the definition of $\mathcal{G}(|\tilde{\Psi}_N\rangle)$ can be performed  by  
using the maximization of the Schmidt coefficients across all possible bipartitions of $|\tilde{\Psi}_N\rangle$, leading 
to the simplified version of GGM \cite{ggmn}, given by
\begin{eqnarray}
 \mathcal{G}(|\tilde{\psi}_N\rangle)
 =1-\underset{\mathcal{S}_{\mathcal{A}:\mathcal{B}}}{\max}\{\lambda^2_{\mathcal{A}:\mathcal{B}}\}.
 \label{ggm_exp}
\end{eqnarray}
Here, $\lambda_{\mathcal{A}:\mathcal{B}}$ is the maximum Schmidt 
coefficient of $|\tilde{\Psi}_N\rangle$, the maximum being 
taken over the set, $\mathcal{S}_{\mathcal{A}:\mathcal{B}}$, of all arbitrary $\mathcal{A}:\mathcal{B}$
bipartitions such that $\mathcal{A}\cup\mathcal{B}=\{1,2,\cdots,N\}$, 
and $\mathcal{A}\cap\mathcal{B}=\mathbf{\Phi}$, the null set. 
The above expression allows one to compute GGM of a multiparty 
pure state in arbitrary dimensions and for arbitrary number of parties.

\section{Proofs of the Propositions}
\label{prop_proof} 

\noindent\textbf{Proposition \ref{pr:ghz}.} Once the rank-$1$ projective measurement is 
performed over the first qubit, an ensemble of two pure states, 
$\{p^{l},|\Psi^l\rangle\}$, $l=1,2$, 
is obtained, where $p^{l}=|a_1|^2q^l_1+|a_2|^2q^l_2$, 
$|\Psi^{l}\rangle=|\xi^{l}_1\rangle\otimes|\psi^{l}\rangle$ for $l=1,2$,  and
$|\psi^{l}\rangle=(a_1z^l_1|0\rangle^{\otimes N-1}+a_2z^l_2|1\rangle^{\otimes N-1})/\sqrt{p^l}$. The quantities 
$q^l_{1,2}$ and $z^l_{1,2}$ are given by
\begin{eqnarray}
 q^{l}_1&=&(\delta_{1l}c_{\theta/2}^{2}+\delta_{2l}s_{\theta/2}^{2}),\nonumber \\
 q^{l}_2&=&(\delta_{1l}s_{\theta/2}^{2}+\delta_{2l}c_{\theta/2}^{2}),
 \label{ql12}
\end{eqnarray}
and
\begin{eqnarray} 
z^l_1&=&\delta_{1l}c_{\theta/2}-\delta_{2l}e^{-i\phi}s_{\theta/2}, \nonumber \\ 
z^l_2&=&\delta_{1l}e^{i\phi}s_{\theta/2}+\delta_{2l}c_{\theta/2},
\label{zl12}
\end{eqnarray}
with $\delta_{kl}$, $k=1,2$, being Kronecker delta.
Note that the states $|\psi^l\rangle$ are of the form of an $(N-1)$-qubit gGHZ state.
Note also that for an $N$-qubit gGHZ state, the local density matrix, $\rho_n$, of $n$ qubits, 
is diagonal in the computational basis with only two non-zero elements.  Hence, in the present case, 
the LGGM of the gGHZ state can be obtained from Eq. (\ref{lmen}), where 
$\mathcal{G}(\psi^l_{N-1})=1-\max\left\{|a_1|^2q^l_1/p^l,|a_2|^2q^l_2/p^l\right\}$ with $p^{l}=|a_1|^2q^l_1+|a_2|^2q^l_2$.
Using the identity, $\max\{x,y\}+\max\{u,v\}=\max\{x+u,x+v,y+u,y+v\}$ for arbitrary values of $x$, $y$, $u$, and $v$, 
in Eq. (\ref{lmen}), one obtains 
\begin{eqnarray}
 E_{L}^1=1-\underset{\theta}{\min}\left[\max\left\{|a_1|^2,s_{\theta/2}^{2},c_{\theta/2}^{2}\right\}\right].
\label{lggm_gghz}
\end{eqnarray}

In $[0,\pi]$, $c_{\theta/2}^{2}$ is a monotonically decreasing 
function whereas $s_{\theta/2}^{2}$ is a monotonically increasing one, 
and the maximum value of both the functions is unity, occurring at $\theta=0$ 
for $c_{\theta/2}^{2}$, and $\theta=\pi$ 
for $s_{\theta/2}^{2}$. Besides, the values of both functions are equal to $1/2$ at $\theta=\pi/2$. 
Since $|a_1|^2\geq 1/2 \geq|a_2|^2$,
the allowed range of $\theta$ can be divided into three subregions: 
\textbf{(i)} $0\leq\theta\leq\theta_{1}$, where $c_{\theta/2}^{2}\geq|a_1|^2\geq s_{\theta/2}^{2}$, \textbf{(ii)} 
$\theta_{1}<\theta<\theta_2$, where $|a_1|^2\geq\max\{c_{\theta/2}^{2},s_{\theta/2}^{2}\}$, and \textbf{(iii)} 
$\theta_2\leq\theta\leq\pi$, where $s_{\theta/2}^{2}\geq|a_1|^2\geq c_{\theta/2}^{2}$.
Here, the values of $\theta_1$ and $\theta_2$ are obtained as solutions of the equations 
$c^2_{\theta/2}=|a_1|^2$ and $s^2_{\theta/2}=|a_1|^2$ respectively. 
The maximization inside the curly bracket in Eq. (\ref{lggm_gghz})
has to be performed for a fixed value of $\theta$, which  can be chosen from any one of the three subregions, 
\textbf{(i)}, \textbf{(ii)}, and \textbf{(iii)}. Noticing that 
$\min(c^2_{\theta/2})=|a_1|^2$ for $0\leq\theta\leq\theta_1$,
while $\min(s^2_{\theta/2})=|a_1|^2$ in the region $\theta_2\leq\theta\leq\pi$, 
from Eq. (\ref{lggm_gghz}),  one obtains  
$E_{L}^1=|a_2|^2$. Note here that the GGM of the gGHZ state, under the assumption 
that $|a_{1}|^2\geq\frac{1}{2}\geq|a_2|^2$, is 
also $|a_2|^2$, leading to $E_L^{1}=\mathcal{G}$. \hfill $\blacksquare$

\noindent\textbf{Proposition \ref{pr:3gw}.} The post measurement ensemble is 
$\{p^l=(|a_2|^2+|a_3|^2)q^l_1+|a_1|^2q^l_2,
|\psi^l\rangle=[z^l_1(a_2|10\rangle+a_3|01\rangle)+z^l_2a_1|00\rangle]/\sqrt{p^l}\}$, $l=1,2$, 
where $z^l_{1,2}$ 
and $q^l_{1,2}$ are given in Eqs. (\ref{ql12}) and (\ref{zl12}) respectively. One must note that the GGM of two-qubit 
states of the form $|\psi^l\rangle$
is invariant under an interchange of $a_2$ and $a_3$. The LGGM (Eq. (\ref{lmen})), in this case, is given by 
\begin{eqnarray}
 E_L^1=\frac{1}{2}\left(1-\underset{\theta}{\min}\sum_{l=1}^{2}\sqrt{f^l(\theta)}\right),
 \label{fgw}
\end{eqnarray}
where$f^l(\theta)={p^l}^2-4|a_2|^2|a_3|^2u^l$, and 
\begin{eqnarray}
u^l=\delta_{1l}c_{\theta/2}^4+\delta_{2l}s_{\theta/2}^4.
\label{ul} 
\end{eqnarray}
The validity of the single qubit density matrix $\rho^l_2$ of qubit $2$, obtained from the state $|\psi^l\rangle$,
demands that $f^l(\theta)\geq0$ $\forall$ $\theta$. The function $f(\theta)=\sum_{l=1}^{2}\sqrt{f^l(\theta)}$ has minimum at $\theta=0,\pi$
in the interval $0<\theta<\pi$, while having a maximum at $\theta=\pi/2$.  
Hence the optimization in LGGM, in this case, is achieved when measurement is performed in 
the $\{|0\rangle,|1\rangle\}$ basis. 
From Eq. (\ref{fgw}), it can be shown that $E_L^1=\min\{|a_2|^2,|a_3|^2\}$. 
In general, for the local projective measurement being performed on the qubit $r$, 
$E_L^{r}=\min\{|a_j|^2,|a_k|^2\}$, where $r\neq j\neq k$, and $r,j,k\in\{1,2,3\}$.  
This leads us to a non-trivial lower bound of LGGM for the tripartite gW state, 
which is invariant to a change in the choice of 
the measured qubit, as given by the following proposition.

We first consider $r=1$. Let us first assume $|a_1|^2\geq|a_2|^2\geq|a_3|^2$, which 
leads to $\mathcal{G}=\min\{|a_j|^2\}$, $j=2,3$. Assuming other orderings and considering all the cases, we obtain
$\mathcal{G}=\min\{|a_i|^2\}$, $i=1,2,3$.
If $|a_i|^2\in\{|a_2|^2,|a_3|^2\}$, then $E^{1}_L=\mathcal{G}$. 
Else, $\mathcal{G}=|a_1|^2$, implying $E_L^{1}=\min\{|a_2|^2,|a_3|^2\}\geq\mathcal{G}$. Similar proofs hold when $r=2,3$. \hfill $\blacksquare$

\noindent\textbf{Proposition \ref{pr:gw_ubound}.} As in the case of three-qubit systems, here also we start from the case $r=1$. 
The GGM of the state
$|W_N\rangle_g$ can be obtained as $\mathcal{G}=1-\max\{\lambda^{max}_{n}\}$, where 
$\lambda^{max}_{n}$ is the maximum eigenvalue of all possible $n$-qubit reduced density matrices, $\rho^{\mathbf{s}}_{n}$, 
where $1\leq n\leq N/2$ ($1\leq n\leq (N-1)/2$) for even (odd) $N$. Here, ``$\mathbf{s}$'' denotes the set of all possible 
indices $\{s_j\}$, $j=1,\cdots,n$, that represents the positions of the $n$ qubits with $s_j\in\{1,2,\cdots,N\}$. 
The density matrices, $\rho_n^{\mathbf{s}}$, can be written as
\begin{eqnarray}
 \rho_n^\mathbf{s}&=&
 P\left[\sum_{j=1}^na_{s_j}|0\rangle^{\otimes(j-1)}|1\rangle_j|0\rangle^{\otimes(N-j)}\right]\nonumber \\
&&+\sum_{\underset{k\notin\mathbf{s}}{k=1}}^N|a_k|^2P[|0\rangle^{\otimes n}],
\label{redmat_gw_n}
\end{eqnarray}
where $P[|a\rangle]=|a\rangle\langle a|$. From Eq. (\ref{redmat_gw_n}), it is clear that the maximal eigenvalue 
is obtained from the case $n=1$, and from the single qubit density matrix for which 
$|a_{s_1}|^2=\min\{|a_i|^2\}$, $i=1,2,\cdots,N$, leading to $\mathcal{G}=|a_{s_1}|^2$. 

To determine the LGGM, we first consider a measurement in the 
computational basis, $\{|0\rangle,|1\rangle\}$. 
A measurement over the first qubit in this basis 
leads to a product state, $|0\rangle^{\otimes (N-1)}$, with 
probability $p_0=|a_1|^2$, and a pure state $|\Phi^{N-1}\rangle$, with probability $\sum_{i=2}^N|a_i|^2$, which can be 
identified as an $(N-1)$-qubit gW state. From the
above discussion, $\mathcal{G}(|\Phi^{N-1}\rangle)=|a_{j}|^2/\sum_{i=2}^{N}|a_i|^2$, where 
$|a_j|^2=\min\{|a_{i}|^2\}$, $i=2,3,\cdots,N$. The definition of LGGM implies $E_{L,0}^1=|a_j|^2$, 
where the subscript ``0'' 
indicates that the measurement is performed in the basis $\{|0\rangle,|1\rangle\}$. Clearly, 
$|a_j|^2\geq|a_{s_1}|^2\equiv\mathcal{G}$. 
Since the definition of LGGM involves a maximization over the complete set of
projective measurements, $E_{L}^1\geq E_{L,0}^1\geq\mathcal{G}$. Similarly, one can prove for the cases $r=2,3,\cdots,N$,
and hence the proof. \hfill $\blacksquare$

\noindent\textbf{Proposition \ref{pr:spgw}.} Let us consider a gW state with the ordering 
$|a_1|^2\leq|a_2|^2\leq\cdots\leq|a_N|^2$. Assuming local projective measurement in qubit $1$,
$\mathcal{G}=|a_1|^2$ and $E_L^1=|a_2|^2$. The assumed ordering suggests that 
$\max\{|a_2|^2\}=(1-|a_1|^2)/(N-1)$, which corresponds to the LGGM of a gW state of the form 
$|\Phi\rangle=|a_1||1\rangle|0\rangle^{\otimes N-1}+
\sum_{j=2}^{N}a_j|0\rangle^{\otimes(j-1)}|1\rangle_j|0\rangle^{\otimes(N-j)}$, with $\mathcal{G}=|a_1|^2$,
similar to the arbitrary gW state, 
and $|a_j|^2=(1-|a_1|^2)/(N-1)$, $j=2,3,\cdots,N$. Since $\mathcal{G}=|a_i|^2=\min\{|a_k|^2\}$ 
for an arbitrary $N$-qubit gW state with arbitrary ordering of $\{|a_k|^2\}$, $k=1,2,\cdots,N$, one can prove similar 
result for each possible ordering of $\{|a_k|^2\}$, when measurement over qubit $1$ is assumed. The result also 
holds for an arbitrary position, $r$, of the measured qubit, when $r\in\{1,2,\cdots,N\}$. Hence the proof. 
\hfill $\blacksquare$

\section{Single-qubit measurement on symmetric states}
\label{ap:dicke}

Since Dicke states are symmetric, the value of 
LGGM is independent of the position of the 
qubit over which measurement is performed, and the  post measurement ensemble, $\{p^l,\psi^l\}$, is given by
$p^l=A_kq^l_1+B_kq^l_2$ and $|\psi^l\rangle=\sqrt{A_k}z^l_1|D^{N-1}_k\rangle+\sqrt{B_k}z^l_2|D^{N-1}_{k-1}\rangle$, 
with $A_k=\binom{N-1}{k}/\binom{N}{k}$, and $B_k=\binom{N-1}{k-1}/\binom{N}{k}$, 
while $q^l_{1,2}$ and $z^l_{1,2}$ being defined in Eqs. (\ref{ql12}) and (\ref{zl12}).
To determine the LGGM of $|D^N_k\rangle$, one needs 
to calculate the GGM of $|\psi^l\rangle$, which, in turn, requires determination of the 
reduced density matrix of $n$ qubits, $\rho_n^{l,\mathbf{s}}$,  labeled with the set of indices 
$\mathbf{s}\equiv\{s_1,s_2,\cdots,s_n\}$. Since measurement over any 
single qubit of $|D^N_k\rangle$ yields an ensemble of symmetric 
states, the reduced density matrix, $\rho_n^{l,\mathbf{s}}$, of all possible 
collection of $n$ qubits of the state $|\psi^l\rangle$, 
are equivalent.  
Therefore, discarding the index ``$\mathbf{s}$'', $\rho_n^{l}$ can be obtained from $|\psi^l\rangle$ as 
\begin{eqnarray}
 \rho_n^l&=&\frac{1}{\binom{N}{k}p^l}\Big[\sum_{i=0}^nF_i^l P[|D^n_i\rangle]
 -(-1)^l\sum_{i=0}^{n-1}G_i\Big(e^{-i\phi}|D^n_{i+1}\rangle\langle D^n_i| \nonumber \\
 &&+e^{i\phi}|D^n_i\rangle\langle D^n_{i+1}|\Big)\Big],
\end{eqnarray}
with $1\leq n\leq N^\prime$, where $N^\prime=(N-2)/2$ $(N^\prime=(N-1)/2)$ when $N$ is even (odd), and
\begin{eqnarray}
 F^l_i&=&\binom{n}{i}\left[\binom{N-n-i}{k-i}q^l_1+\binom{N-n-i}{k-i-1}q^l_2\right]\nonumber \\
 G_i&=&\frac{s_\theta}{2}\binom{N-n-1}{k-i-1}\sqrt{\binom{n}{i+1}\binom{n}{i}}.
\end{eqnarray}
The GGM of $|\psi^l\rangle$ is given by 
$\mathcal{G}^l=1-\max\{\Lambda^l_n\}$, $n=1,2,\cdots,N^\prime$, where $\Lambda^l_n$ is the maximum 
eigenvalue of $\rho^l_n$. 

Similarly, for an arbitrary state $|D^N_g\rangle$ of the form (\ref{gen_sup_dicke}), rank-$1$ projective 
measurement over qubit $r$, $r=1,2,\cdots,N$, produces an ensemble of two $(N-1)$-qubit symmetric states represented by 
$\{p^l,|\bar{D}^{N-1}_g\rangle\}$, where $p^l=\sum_{k=1}^{N-1} \binom{N-1}{k}|\bar{z}^l_1a_k+\bar{z}^l_2a_{k+1}|^2$,
$|\bar{D}^{N-1}_g\rangle=\frac{1}{\sqrt{p^l}}\sum_{k=1}^{N-1}(\bar{z}^l_1a_k+\bar{z}^l_2a_{k+1})|D^{N-1}_k\rangle$, with
$\bar{z}^l_1=\delta_{1l}c_{\theta/2}-(-1)^l\delta_{2l}e^{i\phi}s_{\theta/2}$, and  
$\bar{z}^l_2=\delta_{1l}e^{-i\phi}s_{\theta/2}+\delta_{2l}c_{\theta/2}$.

\section{LGGM with local measurement over more than one qubit}
\label{ap:example}

Here we present two more examples of multipartite pure states for which local measurement over more than one qubit may turn out to be 
beneficial regarding the value of LGGM. 

\noindent\textbf{Example 1.} Consider the four-qubit state given by
$|\Psi_4\rangle=a(|0000\rangle+|0011\rangle+|1100\rangle+|1111\rangle)
+\sqrt{(1-4a^2)/6}(|0101\rangle+|1010\rangle+|0110\rangle+|1001\rangle+|1011\rangle+|0100\rangle)$,
where $a\leq 1/2$, $a$ being a real number. Note that unlike the four-qubit state considered in Sec. \ref{mqm}, this state
does not belong to the set of gW states. However, similar to the former case, here also, $E_{L}^{\{1,2\}}>E_{L}^1$ for a finite 
range of the allowed values of the state-parameter $a$. 

\noindent\textbf{Example 2.} Consider the five-qubit state given by
$|\Psi_5\rangle=a(|00000\rangle+|00111\rangle+|11000\rangle+|11111\rangle)
+\sqrt{(1-4a^2)/4}(|01010\rangle+|10101\rangle+|00001\rangle+|10000\rangle)$.
Here again a finite parameter range can be obtained in which $E_{L}^{\{1,2\}}>\mathcal{G}$, although $E_L^1<\mathcal{G}$. 

These examples highlight the importance of two-qubit 
measurement in the cases where single-qubit measurement is not enough to increase the 
multipartite entanglement possessed by the original state. 
The results of our investigation on the existence of arbitrary 
multiqubit pure states, in which two-qubit measurements may yield better results than single-qubit ones, 
are presented in Sec. \ref{num_res}.
Intuitively, one can argue that if one increases the number of parties in which the measurement is performed, it helps to 
concentrate entanglement and hence to increase LGGM. Although the above examples support such intuition, counter-examples also exist.

%

\section{GHZ and W class of states}
\label{ap:ghzw}
The normalized three-qubit states of the W-class, up to LU, are given by \cite{dvc}
\begin{equation}
 |\Phi_{w}\rangle=\sqrt{a_1}|001\rangle+\sqrt{a_2}|010\rangle+\sqrt{a_3}|100\rangle+\sqrt{a_4}|000\rangle
 \label{state_wclass}
\end{equation}
with $a_1,a_2,a_3>0$, and $a_4=1-(a_1+a_2+a_3)\geq0$.
Simple algebra dictates that the GGM of $|\Phi_w\rangle$ is given by 
$\mathcal{G}=1-\max\{\lambda_i\}$, $i=1,2,3$, where $\lambda_i=[1+(1-4((a_j+a_k)a_i)^{1/2}]/2$,
with $i,j,k\in\{1,2,3\}$, and no two among $i,j,k$ being equal.
Now, LGGM of $|\Phi_w\rangle$  
for $r=1$ can be obtained as 
$E_{L}^1=[1-\underset{\theta,\phi}{\min}f_{wc}(\theta,\phi)]/2$,
where  
\begin{eqnarray}   
 f_{wc}(\theta,\phi)=\sum_{l=1}^{2}({p^l}^2-4a_1a_2u^l)^{1/2},
\end{eqnarray}
and $p^l=(a_1+a_2+a_4)q^l_1+a_3q^l_2-(-1)^l\sqrt{a_3a_4}s_{\theta}c_{\phi}$, with $q^l_{1,2}$ and $u^l$ given in 
Eqs. (\ref{ql12}) and (\ref{ul}), respectively. The optimization of the above function leads to 
two equations involving the real parameters $\theta$ and $\phi$. One of them implies $\phi=0,\pi$ independent of the value of $\theta$, 
while the second equation, independent of $\phi$, 
has to be solved numerically for $\theta$. Numerical solution of the latter provides the values of $\theta$, which, 
along with $\phi=0,\pi$, makes the Jacobian of $f_{wc}(\theta,\phi)$ positive semidefinite. 

The normalized three-qubit states of the GHZ class, up to LU, can be represented by \cite{dvc}
\begin{equation}
 |\Phi_{ghz}\rangle=\sqrt{K}\left(c_\delta|000\rangle+e^{i\mu}s_\delta\bigotimes_{i=1}^{3}|\eta_i\rangle\right), \nonumber
\end{equation}
where $|\eta_i\rangle=c_{\gamma_i}|0\rangle+s_{\gamma_i}|1\rangle$, 
and $K^{-1}=1+2c_{\delta}s_{\delta}c_{\gamma_1}c_{\gamma_2}c_{\gamma_3}c_{\mu}$, 
$K$ being the normalization factor,  and 
$K \in (1/2,\infty)$. The ranges for the five real parameters are $\delta\in(0,\pi/4]$, $\gamma_i\in(0,\pi/2]$, $i=1,2,3$, 
and $\mu\in[0,2\pi)$. Due to increased number of parameters, determination of GGM as well as LGGM for arbitrary GHZ class of states
are to be achieved via numerical techniques.

\section{Classes of four-qubit states}
\label{ap:fourq}
The nine classes of four-qubit states, as considered in \cite{slocc,infinite}, are 
\small
\begin{widetext}
\begin{eqnarray}
 |\Psi^4_1\rangle&=&\frac{1}{2}\{(a_1+a_2)(|0000\rangle+|1111\rangle)+(a_1-a_2)(|0011\rangle+|1100\rangle)
 +(a_3+a_4)(|0101\rangle+|1010\rangle)+(a_3-a_4)(|0110\rangle+|1001\rangle)\},\nonumber \\
 |\Psi^4_2\rangle&=&\frac{1}{2}\{(a_1+a_2)(|0000\rangle+|1111\rangle)+(a_1-a_2)(|0011\rangle+|1100\rangle)
 +2a_3(|0101\rangle+|1010\rangle+|0110\rangle),\nonumber \\ 
 |\Psi^4_3\rangle&=&a_1(|0000\rangle+|1111\rangle)+a_2(|0101\rangle+|1010\rangle+|0110\rangle+|0011\rangle), \nonumber \\
 |\Psi^4_4\rangle&=&\frac{1}{2}\{2a_1(|0000\rangle+|1111\rangle)+(a_1+a_2)(|0101\rangle+|1010\rangle)
 +(a_1-a_2)(|0110\rangle+|1001\rangle)+\sqrt{2}i(|0001\rangle+|0010\rangle+|0111\rangle+|1011\rangle)\},\nonumber \\ 
 |\Psi^4_5\rangle&=&a_1(|0000\rangle+|0101\rangle+|1010\rangle+|1111\rangle)
 +i|0001\rangle+|0110\rangle-i|1011\rangle,\nonumber \\ 
 |\Psi^4_6\rangle&=&a_1(|0000\rangle+|1111\rangle)+|0011\rangle+|0101\rangle+|0110\rangle,\nonumber \\
 |\Psi^4_7\rangle&=&|0000\rangle+|0101\rangle+|1000\rangle+|1110\rangle,\nonumber \\ 
 |\Psi^4_8\rangle&=&|0000\rangle+|1011\rangle+|1101\rangle+|1110\rangle,\nonumber \\ 
 |\Psi^4_9\rangle&=&|0000\rangle+|0111\rangle,
\end{eqnarray}
\end{widetext}
\normalsize 
where the complex parameters $a_1$, $a_2$, $a_3$, and $a_4$ have non-negative real parts.

\end{document}